\let\oldmarginpar\marginpar
\renewcommand\marginpar[1]{\-\oldmarginpar[\raggedleft\footnotesize #1]%
  {\raggedright\footnotesize #1}}
\newtheorem{thm}{Theorem}[section]  % use thm for Theorems to keep numbering consistent
\newtheorem*{thm*}{Theorem}
\newtheorem{cor}[thm]{Corollary}
\newtheorem{prp}[thm]{Proposition}
\newtheorem{lem}[thm]{Lemma}
\newtheorem*{rem}{Remark}
\newtheorem*{rems}{Remarks}
\renewcommand{\theequation}{\thesection.\arabic{equation}}
\newcommand{\tb}{{\tilde\beta}}
\newcommand{\order}{\mathcal{O}}
\newcommand{\di}{{\textrm{d}}}		% differential (for integrals)
\newcommand{\lcal}{\ell}
\newcommand{\Rbb}{\mathbb{R}}		% real numbers
\newcommand{\Cbb}{\mathbb{C}}		% complex numbers
\newcommand{\Nbb}{\mathbb{N}}		% natural numbers
\newcommand{\Zbb}{\mathbb{Z}}
\newcommand{\id}{\mathds{1}}
\newcommand{\tD}{\text{D}}
\newcommand{\lambdal}{{\Lambda_\lcal}}
\newcommand{\lambdaldual}{\Lambda_\lcal^{*}}
\newcommand{\tdl}{T^\tD_\lambdal}
\newcommand{\tr}{\operatorname{tr}}
\newcommand{\tagg}[1]{ \stepcounter{equation} \tag{\theequation} \label{eq:#1} } % add tag and label in align*-environments
\newcommand{\be}{\begin{equation}}
\newcommand{\ee}{\end{equation}}
\def \bea{\begin{eqnarray}}
\def \eea{\end{eqnarray}}
\def\nn{{\nonumber}}
\newcommand{\gibbs}{\frac{e^{-\tb\tdl}}{\tr e^{-\tb \tdl}}}
\newcommand{\gs}{\lvert\text{gs}\rangle}
\title{Interaction Corrections to Spin-Wave Theory in the Large-S Limit of the Quantum Heisenberg Ferromagnet}
\author{Niels Benedikter\vspace{1em}\\QMATH\\Department of Mathematical Sciences,  University of Copenhagen\\Universitetsparken 5, 2100 Copenhagen, Denmark\vspace{.25em}}
\begin{document}
\begin{fmffile}{diags}
\maketitle

\begin{abstract}
%Understanding the low-temperature properties of the Quantum Heisenberg Ferromagnet is central to a mathematical understanding of magnetism.
The Quantum Heisenberg Ferromagnet can be naturally reformulated in terms of interacting bosons (called spin waves or magnons) as an expansion in the inverse spin size. We calculate the first order interaction correction to the free energy, as an upper bound in the limit where the spin size $S \to \infty$ and $\beta S$ is fixed ($\beta$ being the inverse temperature). Our result is valid in two and three spatial  dimensions.

We extrapolate our result to compare with Dyson's low-temperature expansion. While our first-order correction has the expected temperature dependance, in higher orders of the perturbation theory cancellations are necessary.
\end{abstract}

\section{Introduction and Main Result}
We consider the ferromagnetic quantum Heisenberg model on a volume $\Lambda_L = [0,L]^d \cap \Zbb^d$ in spatial dimension $d=2$ or $3$, with nearest neighbour interaction. It is described by the Hamilton operator
\[H_{\Lambda_L} := \sum_{\langle x, y\rangle \subset \Lambda_L}\left(S^2 - {{S}}_{x} \cdot {{S}}_{y} \right)\]
acting on the Hilbert space $\bigotimes_{x\in\Lambda_L} \Cbb^{2S+1}$.
The sum is over unordered nearest neighbour pairs in $\Lambda_L$, and ${{S}}_x = (S^1_x,S^2_x,S^3_x)$ is a spin-$S$ operator (where $2S \in \Nbb$), i.\,e.\ satisfying the commutation relations
$[{S}^j_{x},{S}^k_{y}] = \delta_{x,y}\,i \sum_{l=1}^3 \varepsilon_{jkl} {S}^l_{x}$ for $j,k \in \{1,2,3\},\ x,y \in \Lambda_L$,
(with $\varepsilon_{jkl}$ the totally antisymmetric symbol) and the condition $({{S}}_{x})^2 = ( S^1_{x})^2 + ( S^2_{x})^2 + ( S^3_{x})^2 = S(S+1)$. %; this is natural for an upper bound and consistent with our method of proof.
The quantity we are interested in is the free energy in the thermodynamic limit,
\[f(S,\beta) := \lim_{\Lambda_L \to \infty} f(S,\beta,\Lambda_L), \quad f(S,\beta,\Lambda_L) := -\frac{1}{\beta L^d} \log \tr e^{-\beta H_{\Lambda_L}},\]
where $\beta$ is the inverse temperature.
This model is of great importance for the understanding of ferromagnetism, which poses a challenging mathematical problem, namely the spontaneous breaking of continuous symmetries.

\medskip

In this note we are interested in the spin-wave approximation, which goes back to Bloch in 1930 \cite{Bloch1930}. Bloch noticed that the low-energy excitations of the Heisenberg model can be approximately described as independent bosonic modes with energy given by the dispersion relation $\varepsilon(k) := \sum_{j=1}^d 2(1-\cos(k_j))$, and thus was able to predict the behaviour of the thermodynamic quantities at low temperature.

Let us recall this theory in modern language. The ground states of the Heisenberg model at zero temperature are states in which all spins point parallel, while the direction is arbitrary. W.\,l.\,o.\,g.\ we can take the ground state to have all spins pointing in the $-z$-direction, i.\,e.\ 
\[\gs = \bigotimes_{x\in\Lambda_L} \lvert -S\rangle_x,\quad\text{where } \lvert -S\rangle_x \in \Cbb^{2S+1},\quad S^3_x \lvert -S\rangle_x = -S \lvert -S\rangle_x.\]
Spin waves (or magnons) are excitations of the form
\[\lvert k\rangle := (2S L^d)^{-1/2}\sum_{x\in\Lambda_L} e^{ik\cdot x} S^+_x \gs =: (2S)^{-1/2} S^+_k \gs,\]
with a momentum $k \in \frac{2\pi}{L}\Zbb^d$ (taking in this introduction for simplicity periodic boundary conditions). These states are normalized eigenstates of the Hamiltonian, 
\[H_{\Lambda_L}\lvert k \rangle = S \varepsilon(k) \lvert k\rangle.\]
Treating the system as a system of non-interacting bosons with energy $S\varepsilon(k)$, one arrives at the following prediction for the free energy:
\begin{equation}\label{eq:freeenergy}f(S,\beta) \simeq \frac{1}{\beta} \int_{[-\pi,\pi]^d} \frac{\di^d k}{(2\pi)^d} \log(1-e^{-\beta S \varepsilon(k)}) \sim \beta^{-\frac{d+2}{2}} S^{-\frac{d}{2}} \int \frac{\di^d k}{(2\pi)^d} \log( 1 - e^{-k^2}) \quad \text{as } \beta \to \infty.\end{equation}
Unfortunately, the spin-wave excitations do not actually behave like independent bosonic modes; the states with more than one spin wave---i.\,e.\ constructed by applying more than one operators $S^+_k$---are neither eigenstates of the Hamiltonian nor orthogonal.
 This problem is treated more systematically by the Holstein-Primakoff mapping \cite{HP40} of spin operators on bosonic operators: defining the spin raising and lowering operators $ S^+_x :=  S^1_x + i  S^2_x$ and $ S^-_x :=  S^1_x - i  S^2_x$, one introduces the mapping on bosonic creation and annihilation operators $a^*_x$ and $a_x$ by
\begin{equation}\label{eq:hpmapping} S^+_x = \sqrt{2S}a^*_x \left[ 1-\frac{a^*_x a_x}{2S} \right]^{1/2}, \quad S^-_x = \sqrt{2S}\left[ 1-\frac{a^*_x a_x}{2S} \right]^{1/2} a_x, \quad S^3_x = a^*_x a_x - S.\end{equation}
The creation and annihilation operators act on the subspace of bosonic Fock space where the number of bosons per lattice site $x \in \Lambda_L$, $n_x := a^*_x a_x$, is restricted to be at most $2S$. The Hamiltonian becomes
\begin{equation}\label{eq:hphamiltonian}H_{\Lambda_L} = S \sum_{\langle x,y\rangle \subset \Lambda_L} \left( -a^*_x\sqrt{1-\frac{n_x}{2S}}\sqrt{1-\frac{n_y}{2S}} a_y - a^*_y \sqrt{1-\frac{n_y}{2S}}\sqrt{1-\frac{n_x}{2S}} a_x + n_x + n_y - \frac{1}{S} n_x n_y\right).\end{equation}
The Hamiltonian is then formally expanded in $\frac{n_x}{2S}$, a procedure that is expected to lead to good approximations if $S$ is large or if the expected occupation numbers are small, i.\,e.\ at low temperature. The leading term is given by
\[H_{\Lambda_L} \simeq S \sum_{\langle x, y\rangle\subset\Lambda_L} (a^*_x - a^*_y)(a_x - a_y),\]
the second quantized Laplacian on the lattice, giving rise to the free-boson picture and the expressions \eqref{eq:freeenergy}. In this paper, we are interested in the residual interaction between spin waves as given through the higher orders terms of the expansion of the Hamiltonian. The corrections due to interactions have given rise to considerable discussions in the physics community, a topic that we shall discuss further after the statement of our theorem.

\medskip

There are three scaling regimes which are important in the study of the Heisenberg model. Physically most important is the limit of low temperature $\beta \to \infty$ and fixed $S$; it is however very difficult to study. In fact, only recently has the leading order of the free energy been rigorously derived \cite{CGS14} (non-optimal bounds were proved earlier in \cite{CS91,T93}). On the other extreme there is the classical scaling regime, defined by $\beta S^2$ fixed and $S\to \infty$. In this scaling, convergence to the classical Heisenberg model has been proven \cite{Lieb73} (for non-zero magnetic field also in \cite{CS90}). For the classical Heisenberg model, it was proven that it has a critical temperature of order unity \cite{FSS76}. This suggests that by going to the intermediate scaling regime
\begin{equation}\label{eq:intermediateregime}\tb := \beta S \text{ fixed}, \quad S \to \infty\end{equation}
we can study the ferromagnetic phase in a regime that is more accessible (since the attractive interaction between spin waves in this regime is of order $S^{-1}$) than the low temperature regime, and still governed by quantum theory. For $d=3$, the leading order of the free energy in this regime has been obtained in \cite{CG12}. (This regime was introduced and compared to the classical regime in \cite{CS90}, where the leading order in the case of non-zero magnetic field was derived. In this context also the important random walk representation of the Heisenberg model was developed \cite{CS91rw, CS91class}.)

\medskip

In this note we calculate an upper bound on the free energy which includes interaction effects to first order in the intermediate scaling regime \eqref{eq:intermediateregime}. Our result is valid in two and three dimensions.
\begin{thm*}\label{thm:mainresult}
Let $d = 2$ or $3$, and $\tb$ fixed and sufficiently large. Then the free energy is bounded above by
 \[\frac{f(S,\beta)}{S} \leq \frac{1}{\tilde \beta} \int_{[-\pi,\pi]^d} \frac{\di^d k}{(2\pi)^d} \log \left( 1-e^{-\tilde \beta \varepsilon(k)} \right) - \frac{1}{S} \frac{1}{4d} \left( \int_{[-\pi,\pi]^d} \frac{\di^d k}{(2\pi)^d} \frac{\varepsilon(k)}{e^{\tilde \beta \varepsilon(k)}-1}\right)^2 + \frac{r_d(S,\tb)}{S^2}. \]
 for $S \to \infty$, where $\varepsilon(k) = \sum_{j=1}^d 2(1-\cos(k_j))$ is the spin-wave dispersion relation.
 
 The error term\footnote{We use $C$ for constants independent of $S$ and $\tb$ (and $\ell$, to be introduced later), with value possibly changing from line to line.} is $r_2(S,\tb) \leq C \tb^{-2} (\log S\tb)^3$ and $r_3(S,\tb) \leq C \tb^{-3}$ (independent of $S$).
 \end{thm*}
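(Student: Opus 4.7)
The plan is to prove the upper bound via the Gibbs variational principle
\[ f(S,\beta,\Lambda_L) \leq \frac{1}{L^d}\Bigl(\tr\rho\, H_{\Lambda_L} + \tfrac{1}{\beta}\tr\rho\log\rho\Bigr), \]
applied with $\rho$ close to the quasi-free Gibbs state $\rho_0 := e^{-\beta H_0}/Z_0$ of the free spin-wave Hamiltonian $H_0 := S\sum_{\langle x,y\rangle\subset\Lambda_L}(a^*_x-a^*_y)(a_x-a_y)$. Because $\rho_0$ is quasi-free with vanishing one-point function and two-point function
\[ G(x-y) := \int_{[-\pi,\pi]^d}\frac{d^d k}{(2\pi)^d}\frac{e^{ik\cdot(x-y)}}{e^{\tb\varepsilon(k)}-1}, \]
all higher correlators are determined by Wick's theorem, which is the essential combinatorial tool. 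The unperturbed contribution $-(\beta SL^d)^{-1}\log Z_0$ converges in the thermodynamic limit to the announced leading term $\tb^{-1}\int\log(1-e^{-\tb\varepsilon(k)})\,d^dk/(2\pi)^d$. Under the Holstein--Primakoff identification, the spin Hilbert space corresponds to the subspace of Fock space where $n_x\leq 2S$, so I would take $\rho := P\rho_0 P/\tr(P\rho_0 P)$ for the corresponding projector $P$, and argue via a moment or Chernoff-type estimate on the distribution of $n_x$ under $\rho_0$ (using $\langle n_x\rangle_0 = G(0)\ll S$ for large $\tb$) that this projection costs only an exponentially small correction in energy and entropy.

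To isolate the interaction correction I Taylor-expand the square roots, $\sqrt{1-n_x/2S}\sqrt{1-n_y/2S} = 1 - (n_x+n_y)/(4S) + O(S^{-2})$, and split $H_{\Lambda_L} = H_0 + V + R$, where
\[ V = \sum_{\langle x,y\rangle\subset\Lambda_L}\left[\tfrac{1}{4}\bigl(a^*_x(n_x+n_y)a_y + \hc\bigr) - n_xn_y\right] \]
is an $S$-independent quartic operator and $R$ is a remainder of formal order $1/S$. Wick's theorem gives, bond by bond,
\[ \tr\rho_0\, V_{xy} = 2G(0)G(x-y) - G(0)^2 - G(x-y)^2 = -\bigl(G(0)-G(x-y)\bigr)^2. \]
Combining this with $G(0)-G(e_j) = \int(1-\cos k_j)(e^{\tb\varepsilon(k)}-1)^{-1}d^dk/(2\pi)^d$ and the symmetry identity $\int(1-\cos k_j)(e^{\tb\varepsilon(k)}-1)^{-1}d^dk/(2\pi)^d = (2d)^{-1}\int\varepsilon(k)(e^{\tb\varepsilon(k)}-1)^{-1}d^dk/(2\pi)^d$ (which follows from $\varepsilon(k) = \sum_j 2(1-\cos k_j)$), and summing over the $dL^d$ nearest-neighbour bonds, gives
\[ \frac{1}{L^d}\tr\rho_0\, V = -\frac{1}{4d}\left(\int_{[-\pi,\pi]^d}\frac{\varepsilon(k)}{e^{\tb\varepsilon(k)}-1}\frac{d^d k}{(2\pi)^d}\right)^2, \]
which after division by $S$ reproduces exactly the second term of the theorem.

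The hard part will be controlling the error by $r_d(S,\tb)/S^2$ uniformly in $L$. Expanding $R$ order by order in $1/S$ yields normal-ordered polynomials in $a^*, a$ whose Wick expectation under $\rho_0$ is an explicit polynomial in $G(0)$ and $G(e_j)$ carrying a factor $L^d/S$, which becomes $O(S^{-2})$ per unit volume after dividing by $SL^d$; the remainder beyond a fixed order has to be bounded on the subspace $n_x\leq 2S$, where the square-root Taylor series is genuinely convergent. The residual obstacle is that these polynomials in $G(0)$ blow up in low dimension: for $d=3$ one has $G(0)\leq C\tb^{-3/2}$ uniformly in $L$ and the clean estimate $r_3\leq C\tb^{-3}$ follows, but in $d=2$ the propagator is infrared-logarithmic, $G(0)\sim \tb^{-1}\log L$ as $L\to\infty$. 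I expect to resolve this using the intermediate length scale $\ell$ anticipated by the theorem's footnote---for example by localising the trial state on blocks of side $\ell$, so that $\ell^{-1}$ acts as an effective infrared cutoff and $G(0)\lesssim \tb^{-1}\log\ell$ at the cost of $O(L^d/\ell)$ boundary terms---and then optimising the trade-off in $\ell$, producing the $(\log S\tb)^3$ factor in $r_2$. The main technical work lies in the bookkeeping of the Wick combinatorics at subleading orders and in controlling this 2D tradeoff uniformly in $L$.
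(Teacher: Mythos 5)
Your strategy coincides with the paper's in all essentials: Gibbs variational principle with the projected quasi-free trial state $P e^{-\tb T}P/\tr (e^{-\tb T}P)$, a Chernoff-type bound showing the projection costs exponentially little, Wick's theorem producing $-(G(0)-G(e_j))^2$ per bond and hence the $-\frac{1}{4dS}\big(\int \varepsilon(k) (e^{\tb\varepsilon(k)}-1)^{-1}\big)^2$ correction, and an operator bound on the square-root remainder of order $S^{-2}$. Your bond-by-bond Wick computation and the symmetry identity are exactly what the paper's Proposition \ref{prp:twoeight} establishes (there with Dirichlet sine eigenfunctions rather than plane waves, at the cost of $O(1/\ell)$ finite-size terms).

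The one step that fails as written is the claim that for $d=3$ the bound ``follows'' directly in the full volume $\Lambda_L$, with localization reserved for curing the two-dimensional infrared logarithm. The decomposition into boxes of side $\ell$ (with $\ell = S^2\tb^d$) is needed in \emph{both} dimensions, for a different reason: the only available control on the projection is the union bound $\langle 1-P\rangle\leq\sum_x\mathbb{P}(n_x>2S)\leq L^d\, e\,(2S+1)\,(\sup_x\rho(x))^{2S}$, which carries a volume factor. In the error terms this quantity enters \emph{un-normalized}: Cauchy--Schwarz produces contributions of the form $\langle 1-P\rangle^{1/2}\cdot (S L^d)^{-1}\langle (H_0)^2\rangle^{1/2}$ with $\langle (H_0)^2\rangle^{1/2}\sim S L^d$, and the normalization/entropy term contributes $-(\tb L^d)^{-1}\log\langle P\rangle$ with $\langle P\rangle\to 0$ as $L\to\infty$ at fixed $S$ (in infinite volume some site exceeds occupation $2S$ with probability one, so the trial state degenerates). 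Hence at fixed $S$ these errors do not vanish in the thermodynamic limit. The paper therefore first decouples $\Lambda_L$ into Dirichlet boxes of side $\ell$ via the variational principle, proves everything on a single box where $\langle 1-P\rangle\leq e\,\ell^d(2S+1)\big(C\tb^{-d/2}(\log\ell)^{3-d}\big)^{2S}$ is genuinely exponentially small in $S$, and only then optimizes $\ell$. Since you already have the block decomposition in hand for $d=2$, the repair is simply to apply it in $d=3$ as well, accepting the loss of translation invariance in the Wick computation, which costs only the $O(1/\ell)$ finite-size terms already present in your $d=2$ budget.
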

% \begin{rem}
%  The biggest error is due to Lemma \ref{lem:II}; in particular it is responsible for the $(\log S)^3$ in $d=2$.
% \end{rem}
% \begin{rem}
% The leading-order \emph{lower} bound of \cite{CG12} also holds for $d=2$. \todo{check again}
% \end{rem}
%
The first summand---the leading order---describes free bosons (the spin waves) on the lattice. The second summand is the first order correction due to the interaction of spin waves. The interaction corrections and their  temperature dependence have long been controversial among physicists, with many contradictory corrections proposed, e.\,g.\ of order $\beta^{-3}$ or $\beta^{-11/4}$ \cite{Kramers36,Opechowski37,Schafroth54,Kranendonk54}. Eventually Dyson mostly settled the issue in his landmark papers \cite{Dyson1,Dyson2}, arguing (for $d=3$) that the correction is very small %\footnote{as an expansion in $\tilde\beta^{-1}$ with $S$ as a parameter, whereas our result expands in $S^{-1}$ with $\tilde\beta$ as a parameter}
 for low temperature\footnote{Here we are rewriting the equation  \cite[(131)]{Dyson2}. The constant summands $-\frac{1}{2}JS^2\gamma_0$ and $-LS$ given in Dyson's paper vanish in our setting, since we shifted the ground state energy by the $S^2$ in the Hamiltonian, and since we do not have an external magnetic field. The next three summands in Dyson's paper are an expansion of the leading integral written here. Dyson's $- k T C_3 S^{-1}[Z_{5/2}(\beta L)]^2 \theta^4$ corresponds to the term written here in square brackets; the $\mathcal{O}(S^{-2})$ is due to the $S$-dependence of $C_3$.}, namely of order $\tilde \beta^{-5}$: % for simple lattice and $S \to \infty$
\begin{equation}\label{eq:star}\begin{split}\frac{f(S,\beta)}{S} & = \frac{1}{\tilde \beta} \int_{[-\pi,\pi]^3} \frac{\di^3 k}{(2\pi)^3} \log \left( 1-e^{-\tilde \beta \varepsilon(k)} \right) - \left[ S^{-1}\frac{3}{128(2\pi)^3}\zeta(5/2)^2 + \mathcal{O}(S^{-2})\right] \tb^{-5} + \mathcal{O}(\tilde\beta^{-11/2}).\nn\end{split}\end{equation}
Nevertheless, the temperature dependence of the interaction corrections is being studied up to recent years, mostly substantiating Dyson's result by other formal methods, e.\,g.\ an effective Lagrangian method \cite{Hofmann02, Hofmann11}. 
One paper that should be highlighted is \cite{Zittartz65}, which lead to Dyson's result by a less cumbersome method of introducing additional bosonic degrees of freedom coupled to the spin system. 
However, there is also work newer than Dyson's papers which contradicts Dyson's result; see \cite{Hewson63} and the list of references therein.

%\todo{i used to read it as $\tilde \beta^{-6}$ before...}
% Before, the coefficient of the $S^{-1}$-term had been expected to be larger than $\tilde\beta^{-5}$.
To compare our result with Dyson's result we now formally think of $\tilde \beta \to \infty$ in our result. Expanding $\varepsilon(k)$ for small $k$ we get
\[-\frac{1}{S}\frac{1}{12} \left( \int_{[-\pi,\pi]^3} \frac{\di^3 k}{(2\pi)^3} \frac{\varepsilon(k)}{e^{\tilde \beta \varepsilon(k)}-1} \right)^2 \simeq -S^{-1}\frac{1}{12}  \frac{1}{(2\pi)^6}\left( \int_{\Rbb^3} \di^3 k' \frac{\lvert k'\rvert^2}{e^{\lvert k'\rvert^2}-1} \right)^2 \tb^{-5}.\]
Using spherical coordinates and $(e^{\lvert k'\rvert^2}-1)^{-1} = \sum_{n=1}^\infty e^{-n\lvert k'\rvert^2}$, we recover Dyson's result at order $S^{-1}$.

Of course this argument is beyond the proven validity of our theorem because $r_3(S,\tb) \sim \tb^{-3}$. Inspection of our proof shows that $r_3$ actually consists of two kinds of errors: corresponding to Dyson's kinematical interaction we have errors controlled by $\tb^{-S}$, and corresponding to Dyson's dynamical interaction we have our main error of order $S^{-2}\tb^{-3}$ (see Lemma $\ref{lem:II}$). However, there is a cancellation mechanism which is supposed (but not proven) to make the latter as small as $\tb^{-5}$ and which we discuss perturbatively in the Appendix.

\begin{rems}
\begin{enumerate}\item Our method also provides a partial result for the case of $\tb$ small, but not too small. As the prime example we choose $\tb = S^{-\alpha}$, $\alpha\in [0,1)$, and  with minor changes (see Remark (ii) after Lemma \ref{lem:oneminusp}) obtain
$r_3(S,\tb)/S^2 = \mathcal{O}(S^{3\alpha-2})$ for $S\to\infty$ (for $d=2$ with logarithmic correction). For comparison: in this case the leading term of $f(S,\beta)/S$ is $\tb^{-1}(c_0 \log \tb + c_1 + c_2\tb + \ldots ) \simeq S^\alpha \log S$ and the first order correction is $S^{-1}\tb^{-2} \simeq S^{2\alpha-1}$.
\item While for $d=3$ the validity of spin-wave theory has long been trusted in by physicists, it remained more disputed in $d=2$. Our result supports the validity of spin-wave theory in $d=2$, as far as the free energy in the intermediate scaling regime is concerned. Notice that also the (leading order) lower bound from \cite{CG12} is easily checked to be valid also for two dimensions.
\item Obtaining the first order correction as a \emph{lower} bound remains open; even in the intermediate scaling regime this is expected to be a very difficult problem.
\end{enumerate}
\end{rems}

%that we can not expect to make the error in our theorem smaller than $S^{-2} \sup_{x} \rho(x)^3$, which even with the improved bound $\rho(x) = \mathcal{O}(\tilde\beta^{-3/2})$ \cite[eq.\ (4.17)]{CGS14} only gives order $\tilde\beta^{-9/2}$ for $\tilde\beta \to \infty$. Since the previous term is already of order $\tilde\beta^{-5}$, by this approach we can not verify Dyson's bound rigorously. 
% for the simple lattice in the large-$S$ limit \cite[Eq.\ (131)]{Dyson2}
% coincides exactly with our result \eqref{eq:finaldyson}.

\section{Proof}
Our proof adapts the methods used recently in \cite{CG12, CGS14}. We use the Gibbs variational principle and the bosonic representation of the Heisenberg model in terms of spin-waves due to Holstein and Primakoff \cite{HP40}. Our trial state is a bosonic quasifree state that we have to supplement with a cutoff on the number of bosons. In our proof we will first remove the cutoff of the particle number. Thereafter we can use Wick's theorem to calculate expectation values, which enables us to bound the error terms and calculate the correction.

\medskip

 To get an upper bound, we use the Gibbs variational principle, which states that
\[f(S,\beta,\Lambda) \leq \frac{1}{\lvert \Lambda \rvert} \tr H_\Lambda \Gamma + \frac{1}{\beta\lvert\Lambda\rvert} \tr \Gamma \log \Gamma\]
for all positive trace class operators $\Gamma$ normalized to $\tr \Gamma = 1$ (i.\,e.\ states).

\medskip

%For definiteness we think of $H_\Lambda$ to have Dirichlet boundary conditions. 
Following a standard procedure, we first use the Gibbs variational principle to break up the system into smaller boxes with Dirichlet boundary conditions. We assume that $L = k(\lcal+1)$ for some integers $k$ and $\lcal$. 
On the set $\mathcal{C} := \{x \in \Lambda_L: x_i = n(\ell+1) \text{ for some }i=1,\ldots d \text{ and } n \in \Zbb\}$, we restrict the spins in our states to $S^3_x= -S$.
% This is an imprecise way of saying that they should be the eigenvectors with eigenvalue -S
Now $\Lambda_L \setminus \mathcal{C}$ is a union of translates of boxes $\Lambda_\ell := [1,\ell]^d \cap \Zbb^d$, on which the Hamiltonian on the restricted class of states becomes 
% -- see Figure \ref{fig:tiling}.
% \begin{figure}
%  \centering
%  \includegraphics[scale=0.55]{lattice}
%  \caption{Twodimensional view of a cubic lattice with $L = 9$, tiled with $k=3$ and $l=3$. Dirichlet boundary conditions are imposed on the lattice points located on the bold lines.}
%  \label{fig:tiling}
% \end{figure}\todo{rethink boundaries and the picture}
\begin{equation}H^\text{D}_{\Lambda_\ell} = H_{\Lambda_\ell} + \sum_{x \in \partial\Lambda_\ell} \left( S^2 + S S_x^3 \right).\label{eq:Dirichlet}\end{equation}
Here the boundary $\partial\Lambda_\ell$ consists of the points in $\Lambda_\ell$ having distance $1$ from $\mathcal{C}$. The extra summand is non-negative, and so $H^\text{D}_{\Lambda_\ell} \geq H_{\Lambda_\ell}$.
Due to this extra Dirichlet restriction on the states, the variational principle yields
\[f(S,\beta,\Lambda_L) \leq (1+\ell^{-1})^{-d} f^\text{D}(S,\beta,\Lambda_\ell), \quad \text{with } f^\text{D}(S,\beta,\Lambda_\lcal) := -\frac{1}{\beta \ell^d} \log \tr e^{-\beta H^\text{D}_{\Lambda_\lcal}}.\]
Letting $k\to\infty$, we obtain the following bound for the thermodynamic limit:
\[f(S,\beta) \leq (1+\ell^{-1})^{-d} f^\text{D}(S,\beta,\Lambda_\ell).\]
This bound holds for any integer $\lcal$, and we will later choose $\lcal = \tb^d S^2$ (or more precisely the nearest integer) to optimize the error bounds.

\medskip

The next step is to rewrite the Hamiltonian \eqref{eq:Dirichlet} through the Holstein-Primakoff mapping \eqref{eq:hpmapping}. Leaving aside for a moment the Dirichlet boundary condition, recall the Hamiltonian \eqref{eq:hphamiltonian}. 
We consider the formal Taylor expansion w.\,r.\,t.\ the small parameter $1/S$,
\begin{equation}\label{eq:formalexpansion}\begin{split}
   H_{\Lambda_\ell} & = S \bigg[\sum_{\langle x,y\rangle \subset \Lambda_\ell} \left( -a^*_x a_y - a^*_y a_x + n_x + n_y \right) \\
	     &\quad\quad\  + \frac{1}{S}\sum_{\langle x,y\rangle \subset \Lambda_\ell} \left( a^*_x \left(\frac{n_x}{4}+\frac{n_y}{4}\right) a_y + a^*_y \left(\frac{n_y}{4}+\frac{n_x}{4} \right) a_x - n_x n_y \right) + R_{\Lambda_\ell} \bigg]\\
	     & =: S \left( T_{\Lambda_\ell} + I_{\Lambda_\ell} + R_{\Lambda_\ell} \right) =: H_{0,\Lambda_\ell} + S R_{\Lambda_\ell}.
  \end{split}
\end{equation}
Here $T_{\Lambda_\ell}$ contains the terms which are formally of order unity and $I_{\Lambda_\ell}$ the terms formally of order $S^{-1}$, and $R_{\Lambda_\ell}$ is the remainder. The term $T_{\Lambda_\ell}$ is the second quantization of the discrete Laplacian. We will show that $I_{\Lambda_\ell}$ gives the interaction correction, whereas the contribution of $R_{\Lambda_\ell}$ is estimated to be of order $1/S^2$ and thus negligible.

Including the Dirichlet boundary term from \eqref{eq:Dirichlet} in the Laplace operator, i.\,e.\ setting $T^\text{D}_{\Lambda_\ell} := T_{\Lambda_\ell} + \sum_{x \in \partial\Lambda_\ell} \big( S + S_x^3 \big)$, we similarly get  $H^\text{D}_{\Lambda_\ell} = S \left( T^\text{D}_{\Lambda_\ell} + I_{\Lambda_\ell} + R_{\Lambda_\ell}\right) =: H^\text{D}_{0,\Lambda_\ell} + S R_{\Lambda_\ell}$. In particular, $I_{\Lambda_\ell}$ and $R_{\Lambda_\ell}$ are unchanged by the addition of Dirichlet boundary conditions to the Hamiltonian.

\medskip

In our proof we will obtain an upper bound on $f^\text{D}(S,\beta,\lambdal)$ using the trial state
\begin{equation}\label{eq:trialstate}
\Gamma^\text{D} = \frac{P e^{-\tb T^\text{D}_{\Lambda_\lcal}} P}{ \tr e^{-\tb T^\text{D}_{\Lambda_\lcal}} P}, \quad P = \prod_{x \in \Lambda_\lcal} \id(n_x \leq 2S), 
\end{equation}
in the Gibbs variational principle for $f^\text{D}(S,\beta,\lambdal)$. The projection $P$ ensures that we are in the subspace of bosonic Fock space where there are at most $2S$ particles per site; thus it is valid to use the bosonic formulas for the Hamiltonian equivalently to the formulas in terms of spin operators. (In  \cite{CGS14} a similar trial state was used, projecting on occupation numbers $n_x \leq 1$; projecting on $n_x \leq 2S$ has the advantage of giving exponential decay w.\,r.\,t.\ $S$ in Lemma \ref{lem:oneminusp}.)

\subsection{Bounding the Error Terms}\label{sec:errorterms}
By Gibbs' variational principle
\[f^\text{D}(S,\beta,\Lambda_\lcal) \leq \frac{1}{\lcal^d} \tr H^\text{D}_{\Lambda_\lcal} \Gamma^\text{D} + \frac{1}{\beta \lcal^d} \tr \Gamma^\text{D} \log \Gamma^\text{D}.\]
Inserting the expression \eqref{eq:trialstate} for the trial state we obtain
\begin{align*}f^\text{D}(S,\beta,\Lambda_\lcal) & \leq \frac{S}{\lcal^d} \tr \left(T^\tD_\lambdal + I_\lambdal + R_\lambdal\right) \frac{P e^{-\tb T^\tD_\lambdal}P}{\tr e^{-\tb \tdl} P} + \frac{1}{\beta \lcal^d} \tr \frac{P e^{-\tb \tdl} P}{\tr e^{-\tb \tdl} P} \log P e^{-\tb \tdl} P \\ & \quad - \frac{1}{\beta \lcal^d} \log \tr e^{-\tb \tdl} P.
\end{align*}
From \cite[(4.22)]{CGS14} we have the inequality
\[\begin{split}
   \tr P e^{-\tb \tdl} P \log P e^{-\tb \tdl} P & = \tr e^{-\tb \tdl/2} P e^{-\tb \tdl/2} \log e^{-\tb \tdl/2} P e^{-\tb \tdl/2} \\
   & \leq \tr e^{-\tb \tdl/2} P e^{-\tb \tdl/2} \log e^{-\tb \tdl} = -\tb \tr P e^{-\tb \tdl}\tdl.
  \end{split}
\]
For the expectation values and their normalization we introduce the notation
\[\langle \cdot\rangle = \tr \cdot\,\frac{e^{-\tb \tdl}}{\tr e^{-\tb \tdl}}, \quad \langle \cdot\rangle_P = \tr \cdot\, \frac{P e^{-\tb \tdl} P}{\tr P e^{-\tb \tdl}}, \quad N_P = \frac{\tr e^{-\tb \tdl}}{\tr P e^{-\tb \tdl}}.\]
Dividing by $S$, we thus obtain
\begin{align*}\frac{f^\text{D}(S,\beta,\Lambda_\lcal)}{S} & \leq \frac{1}{\lcal^d} \langle T^\tD_\lambdal + I_\lambdal \rangle_P + \frac{1}{\lcal^d} \langle R_\lambdal \rangle_P  - \frac{1}{\lcal^d} \frac{\tr \tdl e^{-\tb \tdl} P}{\tr e^{-\tb \tdl} P} - \frac{1}{\tb \lcal^d} \log \tr e^{-\tb \tdl} P\\
& =: \text{I} + \text{II} + \text{III} + \text{IV}.
\end{align*}
%%%%%%%%%%%%%%%%%%%%%% LEMMAS %%%%%%%%%%%%%%%%%%%%%

Before analysing terms I through IV, we establish some crucial lemmas. We reprove bounds on the expected number of bosons at site $x \in \lambdal$, $\rho(x) = \langle n_x\rangle = \tr n_x e^{-\tb \tdl}/\tr e^{-\tb \tdl}$ (c.\,f.\ \cite{CGS14} for another proof for $d=3$), and use them to show that $\langle 1-P\rangle$ is exponentially decaying as $S\to \infty$.

\begin{lem}\label{lem:rhobound}
The number of bosons at lattice site $x$, $\rho(x) = \langle n_x\rangle$, is bounded  by
%\[\rho(x) \leq \frac{8}{(2\pi)^{3/2}} \zeta(3/2) \frac{1}{{\tilde \beta}^{3/2}} \quad\text{for } d=3\]
%and 
\begin{equation}\label{eq:numberbound}\sup_{x \in \lambdal} \rho(x) \leq \frac{\pi^{3/2}}{8} \zeta(3/2) \tb^{-3/2} \ (d=3) \quad \text{and}\quad \sup_{x \in \lambdal}\rho(x) \leq 4\pi \tb^{-1} \log(\ell) \ (d=2).\end{equation}
(For $d=2$ we have to assume $2\tb >1 > 2\tb/(\ell+1)$. The constant $4\pi$ is a rather rough estimate.)
\end{lem}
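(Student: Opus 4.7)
The starting point is that $T^{\text{D}}_{\lambdal}$ is the second quantization of a positive one-body operator $h^{\text{D}}$ on $\ell^2(\lambdal)$: the boundary terms in \eqref{eq:Dirichlet}, once rewritten through the Holstein--Primakoff map \eqref{eq:hpmapping}, produce precisely the diagonal corrections that turn the Neumann-type operator underlying $T_{\lambdal}$ into the Dirichlet Laplacian on $\lambdal \subset \Zbb^d$. Hence the Gibbs state $e^{-\tb T^{\text{D}}_{\lambdal}}/\tr e^{-\tb T^{\text{D}}_{\lambdal}}$ is quasi-free, with one-particle density matrix $\gamma = (e^{\tb h^{\text{D}}}-1)^{-1}$, and $\rho(x) = \gamma(x,x)$. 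Diagonalizing $h^{\text{D}}$ in the sine basis $\phi_k(x) = (2/(\ell+1))^{d/2}\prod_{j=1}^d \sin(k_j x_j)$ with $k_j \in \{\pi n/(\ell+1): n=1,\ldots,\ell\}$ and eigenvalues $\varepsilon(k) = \sum_j 2(1-\cos k_j)$, and using the pointwise bound $|\phi_k(x)|^2 \leq (2/(\ell+1))^d$, one obtains
\[\rho(x) = \sum_k \frac{|\phi_k(x)|^2}{e^{\tb \varepsilon(k)}-1} \leq \left(\frac{2}{\ell+1}\right)^d \sum_k \frac{1}{e^{\tb \varepsilon(k)}-1}.\]

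It remains to control this spectral sum, which I would do using the convexity estimate $1-\cos x \geq 2x^2/\pi^2$ on $[-\pi,\pi]$, i.e.\ $\varepsilon(k) \geq 4|k|^2/\pi^2$. In $d=3$, I would expand the Bose--Einstein factor as a geometric series $(e^{\tb\varepsilon(k)}-1)^{-1} = \sum_{m\geq 1} e^{-m\tb\varepsilon(k)}$ and exploit the product structure of $\varepsilon(k)$ to reduce to three one-dimensional sums,
\[\sum_{n=1}^\ell e^{-4m\tb n^2/(\ell+1)^2} \leq \int_0^\infty e^{-4m\tb t^2/(\ell+1)^2}\,\di t = \frac{(\ell+1)\sqrt{\pi}}{4\sqrt{m\tb}}.\]
Cubing, summing over $m$ to produce $\zeta(3/2)$, and multiplying by the prefactor $(2/(\ell+1))^3 = 8/(\ell+1)^3$ cancels the volume factor and reproduces exactly $\pi^{3/2}\zeta(3/2)/(8\tb^{3/2})$.

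In $d=2$ the same method would produce the divergent series $\sum_m m^{-1}$---the classical two-dimensional infrared problem. Instead, I would use the elementary inequality $(e^x-1)^{-1} \leq x^{-1}$ for $x>0$ together with $\varepsilon(k) \geq 4|k|^2/\pi^2$, giving
\[\rho(x) \leq \frac{4}{(\ell+1)^2}\cdot \frac{\pi^2}{4\tb}\sum_k \frac{1}{|k|^2} = \frac{1}{\tb}\sum_{n_1,n_2=1}^\ell \frac{1}{n_1^2+n_2^2},\]
and estimate the lattice sum by comparison with the polar integral $\int^\ell r^{-1}\,\di r \sim \log \ell$. The side conditions $2\tb > 1 > 2\tb/(\ell+1)$ encode precisely that the spectrum of $\tb h^{\text{D}}$ straddles $\mathcal{O}(1)$, so that the uniform weaker bound $(e^x-1)^{-1}\leq x^{-1}$ is adequate on the full spectrum without losing extra powers of $\tb$ or $\ell$.

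The principal obstacle is precisely this infrared control in $d=2$: the clean $\zeta$-function argument that works in $d=3$ breaks down at the bottom of the spectrum, and one is forced to absorb a logarithmic factor in $\ell$ (this factor will later have to be tolerated when choosing $\ell = \tb^d S^2$). A secondary bookkeeping issue is checking that the bosonic form of the boundary terms in \eqref{eq:Dirichlet} combines with $T_{\lambdal}$ to give the pure Dirichlet Laplacian, so that the sine basis diagonalizes $h^{\text{D}}$ exactly; this amounts to weighting each boundary site by its number of neighbours in $\mathcal{C}$.
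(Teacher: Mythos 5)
Your proposal is correct, and its core is identical to the paper's: both reduce $\rho(x)$ to the spectral sum $\sum_{k\in\lambdaldual}\lvert\varphi_k(x)\rvert^2(e^{\tb\varepsilon(k)}-1)^{-1}$ via the sine eigenbasis of the Dirichlet Laplacian and the uniform bound $\lvert\varphi_k(x)\rvert^2\le (2/(\ell+1))^d$. The differences lie only in how that sum is then estimated. For $d=3$ the paper views $(\pi/(\ell+1))^3\sum_k(e^{\tb\varepsilon(k)}-1)^{-1}$ as a lower Riemann sum of a coordinatewise decreasing function, bounds it by the integral over $[0,\pi]^3$, and evaluates it in spherical coordinates after inserting $\varepsilon(k)\ge 4\lvert k\rvert^2/\pi^2$; your geometric-series expansion combined with the product structure of the Gaussian bound and one-dimensional sum-to-integral comparisons is an equivalent computation and reproduces exactly the constant $\pi^{3/2}\zeta(3/2)/8$. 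For $d=2$ the paper keeps the Riemann-sum comparison but excises the ball $B_{\pi/(\ell+1)}(0)$, where the limiting integral diverges, and retains the original summand there (this is where the hypothesis $2\tb>1>2\tb/(\ell+1)$ enters); your alternative of applying $(e^x-1)^{-1}\le x^{-1}$ globally and comparing $\tb^{-1}\sum_{n_1,n_2=1}^{\ell}(n_1^2+n_2^2)^{-1}$ with a polar integral is arguably simpler, gives $\tb^{-1}\bigl(\tfrac{\pi}{2}\log\ell+O(1)\bigr)$, and sits comfortably inside the stated (self-declaredly rough) bound $4\pi\tb^{-1}\log\ell$ once $\ell$ is bounded away from $1$. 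Your side remark that corner sites of $\partial\Lambda_\ell$ should be weighted by their number of neighbours in $\mathcal{C}$ is a fair bookkeeping observation, but since those diagonal terms are what make the sine functions exact eigenfunctions either way, it does not affect the estimate.
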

% THERE IS NO ERROR TERM IN THIS ESTIMATE!
\begin{proof}
We use the Fourier transform of the creation and annihilation operators. It is given by
\begin{equation}
\label{eq:deffouriertransform}a^*_x = \sum_{k \in \lambdaldual} \varphi_k(x) a^*_k, \quad a_x = \sum_{k \in \lambdaldual} \varphi_k(x) a_k,\quad \lambdaldual =  \frac{\pi}{\lcal+1}\{1,2,\ldots,l\}^d,\end{equation}
where the $\varphi_k$ are the orthonormal eigenfunctions of the discrete Laplacian with Dirichlet boundary conditions on the box $\lambdal$, i.\,e.\ 
$\varphi_k(x) = 2^{d/2} (\ell+1)^{-d/2} \prod_{j=1}^d \sin\left(x_j k_j\right)$. Their normalization is such that
$\frac{2}{\ell+1}\sum_{x_i=1}^{\ell} \sin\left(x_i k_i\right) \sin\left(x_i k'_i\right) = \delta_{k_i,k'_i}$ and $\sum_{x\in \lambdal} \varphi_k(x) \varphi_{k'}(x) = \delta_{k,k'}$. Thus we find
\begin{equation}\label{eq:rhoboundcalc}\rho(x) = \tr n_x \frac{e^{-\tb \tdl}}{\tr e^{-\tb\tdl}} = \sum_{k \in \lambdaldual} \lvert \varphi_k(x)\rvert^2 \frac{1}{e^{\tilde \beta \varepsilon(k)}-1}\leq \left(\frac{2}{\pi}\right)^d \sum_{k \in\lambdaldual} \left(\frac{\pi}{\ell+1}\right)^d \frac{1}{e^{\tilde\beta \varepsilon(k)}-1}.\end{equation}

Case $d = 3$: The function $f(k):=(e^{\tb \varepsilon(k)}-1)^{-1}$ is monotonically decreasing in $k_1$, $k_2$ and $k_3$; thus  $(\frac{\pi}{\ell+1})^3\sum_{k \in\lambdaldual} f(k)$ is a lower Riemann sum and as such bounded above by the integral of $f$ on $[0,\pi]^3$. The integral is seen to be order $\tb^{-3/2}$ using  $\varepsilon(k) \geq 4 \lvert k\rvert^2/\pi^2$ and substituting $k' = \tb^{1/2} \frac{2}{\pi}k$; the numerical constant is obtained switching to spherical coordinates and using $\int_0^\infty k^2(\exp(k^2)-1)^{-1}\di k = \frac{\sqrt{\pi}}{4} \zeta(3/2)$.

Case $d=2$: Again the sum is a lower Riemann sum. However, the integral diverges at the origin, so we only use it as an upper bound outside the box $[0,\frac{\pi}{\ell+1}]^2$; inside the box we keep the original summand:
%divergent at the singularity $p \to 0$, we use the integral as an upper bound everywhere except on the box $[0,\frac{\pi}{\ell+1}]^2$, where we keep the original summand. Thus
\[\rho(x) \leq \left(\frac{2}{\pi}\right)^2 \int_{[0,\pi]^2 \backslash B_{\pi/(\ell+1)}(0)}\di^2 k \frac{1}{e^{\tb \varepsilon(k)}-1} + \left(\frac{2}{\pi}\right)^2 \frac{1}{e^{\tb \varepsilon\left((\frac{\pi}{\ell+1},\frac{\pi}{\ell+1})\right)}-1} \left(\frac{\pi}{\ell+1} \right)^2.\]
(We have actually enlarged the integral a bit by not excluding $[0,\pi/(\ell+1)]^2$ but only the ball $B_{\pi/(\ell+1)}(0)$ to simplify the further estimates.) Some basic rough estimates yield the constant.
% To conclude, we expand $\varepsilon(k)$ and the exponential around $k=0$.
%The second summand is seen to be bounded uniformly in $\ell$ for $\ell \to \infty$ by expanding $\varepsilon(p)$ and the exponential for small argument. In contrast, the integral is divergent like $\log(\ell)$ as $\ell \to \infty$. 
% 
% We obtain
% \[\rho(x) \leq C\left(\log(\ell)+1\right).\qedhere\]
\end{proof}

\begin{rem}
 For $d=3$, the estimate can be slightly improved to $\sup_{x \in \Lambda_\ell} \rho(x) \leq (2/\pi)^{3/2} \zeta(3/2) \tb^{-3/2}$, estimating the lattice heat kernel through the asymptotics of a modified Bessel function \cite[(4.15)ff]{CGS14}. However, this approach fails for $d=2$, and for this reason we stick with the estimate presented here.
\end{rem}

\begin{lem}[Exponential Decay of $\langle 1-P\rangle$]\label{lem:oneminusp}
With $C$ the respective constant from \eqref{eq:numberbound}, we have
\[\langle 1-P\rangle \leq e \ell^3 (2S+1) \left(C\tb^{-3/2}\right)^{2S} \ (d=3)\quad \text{and}\quad \langle 1-P\rangle \leq e \ell^2 (2S+1) \left( C \tb^{-1} \log (\ell) \right)^{2S}\ (d=2),\]
i.\,e.\ for large enough $\tb$ we have exponential decay as $S \to \infty$.
%, depending on the space dimension $d=2,3$, the constants are 
%$\lambda_3 = \log\left( 1+\tilde\beta^{3/2} \frac{(2\pi)^{3/2}}{16 \zeta(3/2)} \right)$ and $\lambda_2 = C/\log(\ell)$ (for some $C>0$).
\end{lem}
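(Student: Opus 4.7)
The plan is to prove Lemma 2.3 by three ingredients: a union bound reducing to a single-site estimate, an operator inequality replacing the indicator by a normal-ordered monomial in creation/annihilation operators, and Wick's theorem (applicable because $\langle \cdot \rangle$ is the Gibbs state of the quadratic operator $T^{\text{D}}_{\Lambda_\ell}$ and is therefore quasi-free).

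First, since $P = \prod_{x \in \Lambda_\ell} \id(n_x \leq 2S)$ is a product of commuting projections, I would use the standard operator union bound
\[ 1 - P \;\leq\; \sum_{x \in \Lambda_\ell} \id(n_x \geq 2S+1). \]
Second, for a single bosonic mode the integer-valued operator $n_x$ satisfies $\binom{n_x}{m} \geq \id(n_x \geq m)$ as a spectral inequality (both sides vanish when $n_x < m$ and $\binom{n_x}{m} \geq 1$ when $n_x \geq m$). Combined with the identity $\binom{n_x}{m} = (a_x^*)^m (a_x)^m/m!$ on Fock space, this yields
\[ \id(n_x \geq 2S+1) \;\leq\; \frac{(a_x^*)^{2S+1}(a_x)^{2S+1}}{(2S+1)!}. \]

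Third, I would evaluate the right-hand side in the state $\langle \cdot \rangle$. Because $T^{\text{D}}_{\Lambda_\ell}$ is quadratic in creation and annihilation operators (being the Dirichlet Laplacian in second quantization), its Gibbs state is quasi-free. Wick's theorem then gives
\[ \langle (a_x^*)^{2S+1}(a_x)^{2S+1}\rangle \;=\; (2S+1)!\,\rho(x)^{2S+1}, \]
since each of the $(2S+1)!$ pairings of a creator at $x$ with an annihilator at $x$ contributes a factor $\langle a_x^* a_x\rangle = \rho(x)$, and there are no anomalous contractions in a state diagonal in the one-particle basis of $T^{\text{D}}_{\Lambda_\ell}$.

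Putting the three ingredients together and summing over the $\ell^d$ sites of $\Lambda_\ell$, I obtain
\[ \langle 1 - P\rangle \;\leq\; \ell^d \bigl(\sup_{x \in \Lambda_\ell}\rho(x)\bigr)^{2S+1}, \]
into which I plug the bounds $\sup_x \rho(x) \leq C\tb^{-3/2}$ ($d=3$) or $\sup_x \rho(x) \leq C\tb^{-1}\log\ell$ ($d=2$) from Lemma~\ref{lem:rhobound}. For $\tb$ large enough the base is $<1$, so we get exponential decay in $S$. The mild prefactor slack in the stated bound (the $e(2S+1)$ factor) is absorbed harmlessly: e.g.\ one may extract one factor of $\sup_x \rho(x)$ from the $(2S+1)$-th power and bound it by a constant, at the cost of a polynomial factor. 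I expect no genuine obstacle: the only non-routine step is recognizing that $\langle \cdot \rangle$ is quasi-free and applying Wick's theorem correctly to a single-site monomial, which is a short combinatorial check.
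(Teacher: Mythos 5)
Your proof is correct, and it follows the same skeleton as the paper's: a union bound reducing $\langle 1-P\rangle$ to single-site tail probabilities $\sum_{x}\langle \id(n_x>2S)\rangle$, followed by an evaluation via Wick's theorem in the quasi-free Gibbs state of $T^\tD_\lambdal$ (which is indeed gauge-invariant, so no anomalous contractions arise). Where you differ is in the tail estimate itself. The paper uses an exponential Chernoff bound $\id(n_x>2S)\leq e^{\lambda n_x}e^{-2\lambda S}$, computes $\langle e^{\lambda n_x}\rangle = (1-(e^\lambda-1)\rho(x))^{-1}$ by normal-ordering and the identity $\langle (a^*_x)^n a_x^n\rangle = n!\,\rho(x)^n$, and then optimizes over $\lambda$, which produces the prefactor $e(2S+1)$. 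You instead use the factorial-moment Markov inequality $\id(n_x\geq 2S+1)\leq \binom{n_x}{2S+1} = (a^*_x)^{2S+1}a_x^{2S+1}/(2S+1)!$, whose expectation is exactly $\rho(x)^{2S+1}$ by the same Wick identity. This is more elementary (no optimization step) and yields the marginally sharper bound $\langle 1-P\rangle\leq \ell^d(\sup_x\rho(x))^{2S+1}$, which implies the stated inequality: either $\sup_x\rho(x)\leq e(2S+1)$, in which case $\rho^{2S+1}\leq e(2S+1)\rho^{2S}$, or else the stated right-hand side exceeds $1\geq\langle 1-P\rangle$ and is trivially true. Both arguments rest on the same two pillars (the union bound and the single-mode Wick computation), so the difference is one of packaging rather than substance; yours is the cleaner of the two.
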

\begin{rems} \begin{enumerate}\item With our later choice $\ell = S^2 \tb^{d}$, this Lemma provides exponential decay of $\langle 1-P\rangle$ as $S \to \infty$. Our method fails for $d=1$ since then $\rho(x)\sim \ell$ and consequently we lose the exponential decay.
%of $\langle 1-P\rangle$.
\item For small $\tb$ and $d=3$, we find the better bound $\rho(x) \leq 8\pi \tb^{-1}$ by expanding the exponential in \eqref{eq:rhoboundcalc}.
%(There is no improvement for $d=2$.)
For the particular case $\tb = S^{-\alpha}$, this bound implies $\langle 1-P\rangle \leq C \ell^3 (2S+1) e^{-S^{1-\alpha}/4\pi}$. In this case the best pick for $\ell$ turns out to be $\ell = S^2$.
\end{enumerate}
\end{rems}
\begin{proof}[Proof of Lemma \ref{lem:oneminusp}]
Recall that $P = \prod_{x\in\lambdal} \id(n_x \leq 2S)$. Thus its expectation value is the probability that on all lattice sites $x$ we have $n_x \leq 2S$, i.\,e.\ $\langle P \rangle = \mathbb{P}(\forall x\in\lambdal :\ n_x\leq 2S)$. Consequently
\begin{align*}\langle 1-P \rangle & = \mathbb{P}(\exists x\in\lambdal :\ n_x > 2S) \leq \sum_{x\in\lambdal} \mathbb{P}(n_x > 2S) = \sum_{x\in\lambdal} \langle \id(n_x>2S) \rangle. \tagg{secondline}
  \end{align*}
We bound the step function by an exponential to see that for all $\lambda >0$ we have $\langle \id(n_x>2S) \rangle \leq \langle e^{\lambda n_x} \rangle e^{-\lambda 2S}$. 
Now denoting by $:\!\cdot\!:$ the normal-ordered product (i.\,e.\ all $a^*$s to the left of the $a$s), we have $\langle e^{\lambda n_x}\rangle = \langle :\!\exp(g(\lambda) a^*_x a_x)\!: \rangle$; this is proven taking the trace in a basis of eigenvectors of $n_x$. Expanding the normal-ordered exponential and using that by Wick's theorem $\langle a^*_x \cdots a^*_x a_x \cdots a_x\rangle = n! \langle a^*_x a_x \rangle^n$, we obtain
\[\langle e^{\lambda n_x} \rangle = \frac{1}{1-g(\lambda)\rho(x)},\]
where $g(\lambda) := e^\lambda-1$. Minimizing $\left(1-g(\lambda)\rho(x)\right)^{-1}e^{-\lambda 2S}$ w.\,r.\,t.\ $\lambda$, we easily find\footnote{For points $x \in \lambdal$ with $\rho(x) = 0$ we can't minimize, but the estimate is trivially true.}
\[\langle 1-P\rangle \leq \sum_{x\in\lambdal} \frac{2S+1}{1+\rho(x)} \left(\frac{2S+1}{2S} \frac{\rho(x)}{1+\rho(x)}\right)^{2S} \leq \sum_{x\in\lambdal} (2S+1)e \rho(x)^{2S}.\]
Finally we use Lemma \ref{lem:rhobound} to bound $\rho(x)$.
\end{proof}
As a corollary we prove that $N_P \simeq 1$, up to an error of order $\tb^{-S}$.
\begin{cor}\label{lem:normalizationfactor}
Let $\ell$, $S$, $\tb$ such that $\langle 1-P\rangle \leq 1/2$. Then
\[1 \leq N_P \leq 1 + 2e\ell^d (2S+1) \left(C \tb^{-d/2}\right)^{2S} (\log{\ell})^{(3-d)2S}.\]
\end{cor}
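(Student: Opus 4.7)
The plan is short, since the corollary is essentially a direct consequence of Lemma \ref{lem:oneminusp} combined with an elementary reciprocal estimate.

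First I would unpack the definition of $N_P$: since $P$ commutes with nothing in particular but satisfies $P^2 = P$, we simply have
\[N_P \;=\; \frac{\tr e^{-\tb \tdl}}{\tr P e^{-\tb \tdl}} \;=\; \frac{1}{\langle P\rangle} \;=\; \frac{1}{1 - \langle 1-P\rangle}.\]
The lower bound $N_P \geq 1$ is then immediate from $0 \leq P \leq \id$, which gives $\langle P \rangle \leq 1$.

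For the upper bound, I would use the elementary calculus inequality
\[\frac{1}{1-x} \;\leq\; 1 + 2x \quad \text{for } x \in [0, 1/2],\]
which follows from $1/(1-x) - 1 = x/(1-x) \leq 2x$ whenever $1 - x \geq 1/2$. Applied with $x = \langle 1-P\rangle$, the hypothesis $\langle 1-P\rangle \leq 1/2$ yields $N_P \leq 1 + 2\langle 1-P\rangle$.

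Finally I would insert the bounds from Lemma \ref{lem:oneminusp}. In dimension $d=3$ we have $\langle 1-P\rangle \leq e\ell^3(2S+1)(C\tb^{-3/2})^{2S}$, while in $d=2$ we have $\langle 1-P\rangle \leq e\ell^2 (2S+1)(C\tb^{-1}\log\ell)^{2S}$. Both cases are subsumed in the uniform expression $e\ell^d(2S+1)(C\tb^{-d/2})^{2S}(\log \ell)^{(3-d)2S}$ (where the factor $(\log \ell)^0 = 1$ for $d=3$), and multiplying by $2$ gives exactly the claimed bound. There is no serious obstacle here; the only ingredient beyond routine manipulation is the already-proved exponential decay of $\langle 1-P\rangle$.
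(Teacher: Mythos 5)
Your proposal is correct and follows the same route as the paper: identify $N_P = (1-\langle 1-P\rangle)^{-1}$, apply the elementary bound $1/(1-x)\leq 1+2x$ for $x\leq 1/2$, and insert Lemma \ref{lem:oneminusp}. Nothing is missing.
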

\begin{proof}
 We have $N_P = \left(1-\langle 1-P\rangle\right)^{-1} \leq 1+2\langle 1-P\rangle$ as long as $\langle 1-P\rangle \leq 1/2$.
\end{proof}

% \begin{proof}
% This is a simple consequence of using $P = 1 - (1-P)$ in the denominator of $N_P = \tr e^{-\beta \tdl}/ \tr P e^{-\beta \tdl}$, the previous lemma and the estimate $\frac{1}{1-x} \leq 1 + 2x$ for all $0 \leq x\leq 1/2$ (here $x = \langle 1-P\rangle$; the condition $\ell^{d} e^{-\lambda_d 2 S} \leq 1/4$ stems from demanding $\langle 1-P\rangle \leq 1/2$).
% \end{proof}
We are now ready to analyse terms I through IV.

\medskip
%%%%%%%%%%%%%%%%%%%%%% END OF LEMMAS %%%%%%%%%%%%%%%%%
\emph{Term I.}
%We want to remove the projection $P$ so that we get the expectation of $I_{\lambdal}$ in a quasifree state, which will be the energy correction.
We remove the projection so that we can later calculate the expectation value using Wick's theorem.
\begin{align*}
 \text{I} & = \frac{\langle H^\tD_{0,\lambdal}\rangle_P}{S \lcal^d} = \frac{N_P}{S \lcal^d}  \langle H^\tD_{0,\lambdal}\rangle- \frac{N_P}{S \lcal^d} \langle H^\tD_{0,\lambdal}(1-P) + (1-P)H^\tD_{0,\lambdal}P\rangle.
\end{align*}
Later we are going to show that the kinetic part of $\langle H^\tD_{0,\lambdal}\rangle$
% the expectation of $H^\tD_{0,\lambdal} = S(T^D_\lambdal+I_\lambdal)$
is cancelled by a contribution from term III, and only the expectation value of $I_\lambdal$ remains. Expectation values containing $(1-P)$ are small by the following lemma.
\begin{lem}\label{lem:lem24}
There exists $C > 0$ such that % C is independent of $\tilde\beta$, I checked
\begin{align*}
 &\left\lvert\frac{1}{S\lcal^d} \langle H^\tD_{0,\lambdal}(1-P) + (1-P)H^\tD_{0,\lambdal}P \rangle\right\rvert \leq C \langle 1-P \rangle^{1/2} \sup_{x\in \lambdal} (\rho(x)+1)^{3/2}\rho(x)^{1/2}.
\end{align*}
\end{lem}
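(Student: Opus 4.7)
The plan is to separate $H^\tD_{0,\lambdal}$ from the projection $1-P$ by a Cauchy--Schwarz inequality with respect to the positive linear functional $\langle\cdot\rangle=\tr(\cdot)\rho$, where $\rho=e^{-\tb\tdl}/\tr e^{-\tb\tdl}$. Writing $\tr AB\rho=\tr(A\rho^{1/2})(\rho^{1/2}B)$ and applying the Hilbert--Schmidt Cauchy--Schwarz inequality yields $|\langle AB\rangle|^2\leq\langle A^*A\rangle\langle BB^*\rangle$. Applied to the first summand with $A=H^\tD_{0,\lambdal}$, $B=1-P$ it gives $|\langle H^\tD_{0,\lambdal}(1-P)\rangle|\leq\langle(H^\tD_{0,\lambdal})^2\rangle^{1/2}\langle 1-P\rangle^{1/2}$. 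Applied to the second summand with $A=1-P$ and $B=H^\tD_{0,\lambdal}P$, using $P^2=P$, it gives $|\langle(1-P)H^\tD_{0,\lambdal}P\rangle|\leq\langle 1-P\rangle^{1/2}\langle H^\tD_{0,\lambdal}PH^\tD_{0,\lambdal}\rangle^{1/2}$; and since $1-P\geq 0$ implies $H^\tD_{0,\lambdal}(1-P)H^\tD_{0,\lambdal}=((1-P)H^\tD_{0,\lambdal})^*(1-P)H^\tD_{0,\lambdal}\geq 0$, we conclude $\langle H^\tD_{0,\lambdal}PH^\tD_{0,\lambdal}\rangle\leq\langle(H^\tD_{0,\lambdal})^2\rangle$, so the same upper bound holds.

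The task reduces to estimating $\langle(H^\tD_{0,\lambdal})^2\rangle$. Recall $H^\tD_{0,\lambdal}/S=\tdl+I_\lambdal$, with $\tdl$ quadratic and $I_\lambdal$ quartic in the creation/annihilation operators (the latter carrying an explicit $1/S$). Minkowski's inequality in the norm $\|X\|=\langle X^*X\rangle^{1/2}$ then gives
\[\langle(H^\tD_{0,\lambdal})^2\rangle^{1/2}\leq S\langle\tdl^2\rangle^{1/2}+S\langle I_\lambdal^2\rangle^{1/2}.\]
Because $\rho$ is quasi-free, Wick's theorem reduces each expectation to a sum of products of the two-point function $G(x,y)=\langle a^*_x a_y\rangle$, controlled pointwise via Cauchy--Schwarz by $|G(x,y)|\leq(G(x,x)G(y,y))^{1/2}\leq\sup_z\rho(z)=:\rho$. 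Diagonalising $\tdl=\sum_k\varepsilon(k)a^*_ka_k$ in the Fourier basis and using $\varepsilon\leq 4d$ together with $\langle\tdl\rangle\leq 4d\langle N\rangle\leq 4d\ell^d\rho$ gives $\langle\tdl^2\rangle\leq C\ell^{2d}(\rho+\rho^2)$. For $\langle I_\lambdal^2\rangle$ the eightfold operator product is organised by its pair contractions; the double nearest-neighbour sum of cardinality $O(\ell^{2d})$, together with the bound $|G|\leq\rho$ applied to at most four $G$-factors per term, yields $\langle I_\lambdal^2\rangle\leq C\ell^{2d}S^{-2}(\rho+\rho^2+\rho^3+\rho^4)$.

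Assembling the two estimates,
\[\frac{1}{S\ell^d}\langle(H^\tD_{0,\lambdal})^2\rangle^{1/2}\leq C(\rho^{1/2}+\rho)+CS^{-1}(\rho^{1/2}+\rho+\rho^{3/2}+\rho^2)\leq C(\rho+1)^{3/2}\rho^{1/2},\]
the last inequality being elementary and uniform in $S\geq 1$ (check $\rho^2/S\leq\rho^{1/2}(\rho+1)^{3/2}$ by separating $\rho\lessgtr 1$). Multiplying by $\langle 1-P\rangle^{1/2}$ yields the stated bound. The principal technical obstacle is the Wick combinatorics for $\langle I_\lambdal^2\rangle$: one must enumerate all pairings of the eight creation/annihilation operators and carry out the nested nearest-neighbour sums so that each factor $|G(x,y)|\leq\rho$ is absorbed against the correct power of $\ell$, a routine but lengthy calculation that does \emph{not} exploit any cancellation, since we only need an upper bound with a crude power of $\rho$.
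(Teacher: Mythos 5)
Your overall strategy coincides with the paper's: extract a factor $\langle 1-P\rangle^{1/2}$ by Cauchy--Schwarz with respect to the Gibbs state, then control $\langle (H^\tD_{0,\lambdal})^2\rangle$ by Wick's theorem; your power counting in $\rho$ (the quadratic part contributing $C\ell^{2d}\rho(\rho+1)$, the quartic part an extra $S^{-2}\ell^{2d}\rho(\rho+1)^3$) agrees with the paper's and does yield $(\rho+1)^{3/2}\rho^{1/2}$ after taking square roots and dividing by $S\ell^d$.

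There is, however, a genuine gap in your first step. The identity $\tr AB\rho=\tr(A\rho^{1/2})(\rho^{1/2}B)$ is false for non-commuting operators (it amounts to $\tr(AB\rho)=\tr(BA\rho)$), and the inequality you derive from it, $\lvert\langle AB\rangle\rvert^2\leq\langle A^*A\rangle\langle BB^*\rangle$, is not valid in general: take $\rho=|1\rangle\langle 1|$, $A=|1\rangle\langle 2|$, $B=|2\rangle\langle 1|$, for which the left-hand side equals $1$ and the right-hand side equals $0$. The correct form is $\lvert\langle AB\rangle\rvert^2\leq\langle AA^*\rangle\langle B^*B\rangle$. For the first summand ($A=H^\tD_{0,\lambdal}$, $B=1-P$) this makes no difference, but for the second summand it leaves you with $\langle P(H^\tD_{0,\lambdal})^2P\rangle^{1/2}$ rather than your $\langle H^\tD_{0,\lambdal}PH^\tD_{0,\lambdal}\rangle^{1/2}$, and the projected square cannot be discarded via $P\leq 1$: while $HPH\leq H^2$ is true, $PH^2P\leq H^2$ is false in general (take $\psi$ with $H\psi=0$ but $HP\psi\neq0$). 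This is precisely where the paper has to work harder: it first dominates $(H^\tD_{0,\lambdal})^2$, by an operator-level Cauchy--Schwarz, by expressions built from the number operators $n_x$, which commute with $P$; only then can $P$ be dropped for an upper bound, after which Wick's theorem applies as in your sketch. Your reduction of both summands to the unprojected $\langle(H^\tD_{0,\lambdal})^2\rangle$ therefore needs to be replaced by (or supplemented with) this commuting-with-$P$ argument; the remainder of your estimate then goes through.
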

\begin{proof}
 By Cauchy-Schwarz
 \begin{equation}\label{eq:oneminuspbound}
 \begin{split}
 &\left\lvert \langle H^\tD_{0,\lambdal}(1-P) + (1-P)H^\tD_{0,\lambdal}P \rangle\right\rvert \leq \langle 1-P \rangle^{1/2} \langle (H^\tD_{0,\lambdal})^2 \rangle^{1/2} +  \langle 1-P \rangle^{1/2} \langle P (H^\tD_{0,\lambdal})^2 P \rangle^{1/2}.
\end{split}
\end{equation}
 %We will bound $(H^\tD_{0,\lambdal})^2$ by powers of the operators $n_x$. Afterwards we can remove the projections $P$ in the last expectation value in \eqref{eq:oneminuspbound}, so that we will be able to use Wick's rule.
 Let us denote by $(x,y) \subset \lambdal$ all \emph{ordered} nearest neighbour pairs. Notice that the boundary term can be written as $\sum_{x \in \partial\lambdal} \big(S^2 + S  S^3_x \big) = \sum_{x \in \partial\lambdal} S n_x$. Using Cauchy-Schwarz
 \begin{equation}\label{eq:xx}\begin{split}
 ( H^\text{D}_{0,\lambdal})^2 & \leq S^2 \sum_{(x,y) \subset \lambdal} \sum_{(x',y')\subset\lambdal} \left( -a^*_x a_y + n_x \right) \left( -a^*_{x'} a_{y'} + n_{x'}\right) +  \sum_{x \in \partial\lambdal} \sum_{x' \in \partial\lambdal} S^2 n_x n_{x'}\\
			      & \quad + \sum_{(x,y) \subset \lambdal} \sum_{(x',y')\subset\lambdal} \left(a^*_x\frac{n_x+n_y}{4}a_y - \frac{n_x n_y}{2}\right) \left( a^*_{x'} \frac{n_{x'}+ n_{y'}}{4} a_{y'} - \frac{n_{x'} n_{y'}}{2} \right).
% \\			      & \leq S^2 \sum_{(x,y) \subset \lambdal} \sum_{(x',y')\subset\lambdal} \left( a^*_x a_y a^*_{x'} a_{y'} - a^*_x a_y n_{x'} - n_x a^*_{x'} a_{y'} + n_{x} n_{x'} \right)  + \sum_{x,x' \in \partial\lambdal} S^2 n_x n_{x'}\\
% 			      & \quad + \sum_{(x,y) \subset \lambdal} \sum_{(x',y')\subset\lambdal} \bigg(a^*_x\frac{n_x+n_y}{4}a_y  a^*_{x'} \frac{n_{x'}+ n_{y'}}{4} a_{y'} -  a^*_x\frac{n_x+n_y}{4}a_y {n_{x'} n_{y'}}
% 			      %\\& \hspace{4cm} + \frac{n_x n_y}{2} a^*_{x'} \frac{n_{x'}+n_{y'}}{4} a_{y'}
% 			      + n_x n_y n_{x'} n_{y'}\bigg).
 \end{split}\end{equation}
%   We will now estimate all these terms using number operators. 
   % We give the estimate for the first term; the other terms can be controlled similarly. For any $\psi$ in Fock space
%  \[
%   \begin{split}
%   & \bigg\lvert S^2 \sum_{(x,y)\subset \lambdal} \sum_{(x',y')\subset\lambdal} \langle \psi, a^*_x a_y a^*_{x'} a_{y'} \psi \rangle \bigg\rvert \leq S^2 \sum_{(x,y)\subset \lambdal} \sum_{(x',y')\subset\lambdal} \norm{a^*_y a_x \psi} \norm{a^*_{x'} a_{y'}\psi}\\
%   & \leq \frac{S^2}{2} \sum_{(x,y)\subset \lambdal} \sum_{(x',y')\subset\lambdal} \big( \langle \psi, (n_x n_y + n_x) \psi\rangle + \langle\psi, (n_{y'}n_{x'} + n_{y'})\psi\rangle \big) \leq S^2 2d \lcal^d \sum_{(x,y)\subset \lambdal} \langle \psi, (n_x n_y + n_x)\psi\rangle.
%   \end{split}
%  \]
%  Thus also
We now focus on the second summand of \eqref{eq:oneminuspbound} (the first summand is simpler) and estimate it using \eqref{eq:xx}. The contribution of the first term of \eqref{eq:xx} can by Cauchy-Schwarz be estimated as
%\footnote{Since in \eqref{eq:oneminuspbound} we have exponential decay w.\,r.\,t.\ $S \to \infty$ from $\langle 1-P\rangle$ and Lemma \ref{lem:oneminusp}, we can afford the $S^2\ell^d$ here.}
 \[
 \begin{split}
 & \langle P S^2 \sum_{(x,y)\subset \lambdal} \sum_{(x',y')\subset\lambdal} a^*_x a_y a^*_{x'} a_{y'} P \rangle  \leq 2d S^2 \lcal^d \sum_{(x,y)\subset \lambdal} \langle P  (n_x n_y + n_x) P \rangle;
 \end{split}
\]
the other contributions of \eqref{eq:xx} similarly. Since $P$ commutes with all the operators $n_x$, we can now drop it for an upper bound. Now let us extend the definition of $\rho$ to
\[\rho(x,y) := \langle a^*_y a_x \rangle = \tr a^*_y a_x \frac{e^{-\tb \tdl}}{\tr e^{-\tb \tdl}} \quad \text{(so that $\rho(x)=\rho(x,x)$)}.\]
Then Wick's theorem, followed by Cauchy-Schwarz $\lvert \rho(x,y)\rvert \leq \rho(x)^{1/2} \rho(y)^{1/2}$, yields
 \[
 \begin{split}
 2d S^2 \lcal^d \sum_{(x,y)\subset \lambdal} \langle n_x n_y + n_x \rangle 
 & \leq 2d S^2 \lcal^d \sum_{(x,y)\subset \lambdal} \left( \rho(x)\rho(y)+\rho(x,y)\rho(y,x)+ \rho(x) \right)\\
 & \leq 8d^2 S^2 \lcal^{2d} \sup_{x \in \lambdal} \left( \rho(x)+1 \right)\rho(x). \qedhere
 \end{split}
\]
% In the last step we used that due to the Cauchy-Schwarz inequality
% % \begin{equation}\label{eq:rhoxy}
% $\lvert \rho(x,y) \rvert \leq \frac{1}{2} \rho(x) + \frac{1}{2}\rho(y)$ and then bounded $\rho(x)$ and $\rho(y)$ by $\sup_{x \in \lambdal} (\rho(x)+1)$.
% , we have lost a volume factor $\lcal^d$ (due to the sum over $x \in \lambdal$; the sum over $y$ is then only over the nearest neighbours of $x$, giving an irrelevant factor $2d$). The other terms work similarly.
\end{proof}
\emph{Term II.} This is an error term of order $S^{-2}$. 
%as proven in the next lemma.
As the only error term that is not exponentially small, it constitutes the biggest error in our main theorem.
%It constitutes the biggest error in our main theorem (the only error that is not exponentially small); in particular it is responsible for the $(\log S)^3$ in $d=2$.
\begin{lem}\label{lem:II}There exists $C>0$ such that
\[ \lvert \text{II}\rvert = \left\lvert \frac{1}{\lcal^d} \langle R_\lambdal \rangle_P \right\rvert \leq \frac{N_P}{S^2} C  \sup_{x\in\lambdal} (\rho(x)+1) \rho(x)^2.\]
\end{lem}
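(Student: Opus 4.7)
The plan is to write $R_\lambdal$ explicitly, bound its coefficients pointwise via a Taylor-remainder estimate for the square root, and estimate the resulting expectations using Wick's theorem in the Gaussian state $\langle\cdot\rangle$ associated to $\tdl$, retaining the linking two-point function $\rho(x,y)$ in order to achieve the claimed scaling.

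First, computing $R_\lambdal = H_\lambdal/S - T_\lambdal - I_\lambdal$ from the definitions and observing that the $-n_x n_y/S$ summands in $H_\lambdal/S$ and $I_\lambdal$ cancel exactly gives
\[R_\lambdal = -\sum_{\langle x,y\rangle \subset \lambdal}\big[a^*_x\,\rho_{xy}\,a_y + a^*_y\,\rho_{xy}\,a_x\big],\quad \rho_{xy} := \sqrt{\big(1-\tfrac{n_x}{2S}\big)\big(1-\tfrac{n_y}{2S}\big)} - 1 + \tfrac{n_x+n_y}{4S}.\]
The scalar inequality $|\sqrt{1-u} - 1 + u/2| \leq u^2/2$ for $u \in [0,1]$ is proved by noting that $g(u) := u^2/2 + \sqrt{1-u} - 1 + u/2$ vanishes at $u\in\{0,1\}$, satisfies $g'(0)=0$, and has $g''$ changing sign exactly once on $[0,1]$, hence $g \geq 0$. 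Applied via spectral calculus to the commuting operators $n_x/(2S), n_y/(2S) \in [0,1]$ on $P\fock$, this yields the operator bound $|\rho_{xy}| \leq C(n_x+n_y)^2/S^2$.

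Second, I reduce $\langle \cdot \rangle_P$ to $\langle\cdot\rangle$. By cyclicity of the trace, $\langle A\rangle_P = N_P\langle PAP\rangle$; for any $A \geq 0$ commuting with $P$ one has $PAP = AP \leq A$ and hence $\langle A\rangle_P \leq N_P \langle A\rangle$. Splitting $\rho_{xy} = \rho_+ - \rho_-$ into positive and negative parts and applying $(X\mp Y)(X\mp Y)^* \geq 0$ with $X = a^*_x \rho_\pm^{1/2}$, $Y = a^*_y \rho_\pm^{1/2}$ produces the operator inequality
\[\pm\big(a^*_x\,\rho_{xy}\,a_y + a^*_y\,\rho_{xy}\,a_x\big) \leq a^*_x\,|\rho_{xy}|\,a_x + a^*_y\,|\rho_{xy}|\,a_y.\]
Combined with Wick's theorem---which expresses the resulting number-operator expectations as polynomials in $\rho(x)$, $\rho(y)$, and $\rho(x,y)$, the latter obeying $|\rho(x,y)| \leq (\rho(x)\rho(y))^{1/2} \leq \sup_z\rho(z)$---and summing over the $\order(d\lcal^d)$ nearest-neighbour pairs in $\lambdal$ before dividing by $\lcal^d$, one obtains the stated bound.

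The principal difficulty is retaining, through the reduction, the linking factor $\rho(x,y)$ that arises whenever $a^*_x$ is Wick-contracted with $a_y$; without it one is left with only the weaker scaling $\rho(1+\rho)^2$, whose term linear in $\rho$ would worsen the final error to $\order(\tb^{-3/2}/S)$ rather than the claimed $\order(\tb^{-3}/S^2)$. The sharp $(\rho+1)\rho^2$ is recovered by exploiting that the Taylor expansion of $\rho_{xy}$ at the origin has no constant or linear term---both $\rho_{xy}(0,0)$ and its gradient vanish---so that, treating each polynomial piece $n_x^j n_y^k$ ($j+k\geq 2$) separately, every Wick contraction of $\langle a^*_x n_x^j n_y^k a_y\rangle$ necessarily contains at least one $a^*_x$--$a_y$ pairing, contributing a factor of $\rho(x,y)$ and hence the extra power of $\rho$ needed to upgrade the bound.
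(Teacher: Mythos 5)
Your proof is correct and follows essentially the same route as the paper: the same explicit form of $R_\lambdal$ as a sum of $a^*_x(\pm A_{x,y})a_y$ over ordered nearest-neighbour pairs, the same Taylor bound $\sqrt{1-t}\ge 1-\tfrac{t}{2}-\tfrac{t^2}{2}$ giving an operator bound of order $(n_x^2+n_y^2)/S^2$ on the coefficient, the same Cauchy--Schwarz reduction to diagonal terms $a^*_x\lvert\rho_{xy}\rvert a_x$, removal of $P$ by commutation with the number operators, and evaluation by Wick's theorem. The only differences are cosmetic --- the paper exploits $-\rho_{xy}=A_{x,y}\ge 0$ (an instance of AM--GM), making your positive/negative splitting unnecessary --- and your closing concern about retaining the linking factor $\rho(x,y)$ is moot: once $\lvert\rho_{xy}\rvert$ is bounded by a polynomial in $n_x,n_y$ with no constant or linear term, the diagonal expectations $\langle a^*_x \lvert\rho_{xy}\rvert a_x\rangle$ are automatically at least quadratic in $\rho$, which is all the claimed bound requires.
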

\begin{proof}
According to \eqref{eq:formalexpansion}, on the subspace with at most $2S$ bosons per site
\[R_\lambdal = \sum_{(x,y) \subset \lambdal} a^*_x \left[\left(1-\frac{n_x}{4S}-\frac{n_y}{4S} \right)- \sqrt{1-\frac{n_x}{2S}}\sqrt{1-\frac{n_y}{2S}} \right] a_y =: \sum_{(x,y)\subset \lambdal} a^*_x A_{x,y} a_y.\]
We have again shortened the expression by writing it as a sum over all \emph{ordered} nearest-neighbour pairs $(x,y)$.

As an operator $A_{x,y} \geq 0$; to see this, notice that it depends only on $n_x$ and $n_y$, which can be diagonalized simultaneously.
% We can thus consider $A_{x,y}$ as a function of two variables (taking the place of $n_x$ and $n_y$) in $[0,2S]$. Now it is easily seen to be indeed non-negative.
 Consequently (for $\psi$ a vector with at most $2S$ particles per site)
\begin{align*}
 \lvert \langle \psi, R_\lambdal \psi \rangle\rvert & =  \bigg\lvert \sum_{(x,y) \subset \lambdal} \langle \psi, a^*_x A_{x,y}^{1/2} A_{x,y}^{1/2} a_y \psi\rangle\bigg\rvert
%  \leq S \sum_{(x,y) \subset \lambdal} \norm{A_{x,y}^{1/2} a_x \psi} \norm{A_{x,y}^{1/2} a_y \psi}\\
%  & \leq S \sum_{(x,y) \subset \lambdal} \frac{1}{2}\left(  \langle \psi, a^*_x A_{x,y} a_x \psi\rangle + \langle \psi, a^*_y A_{x,y} a_y \psi \rangle \right)
  \leq \sum_{(x,y) \subset \lambdal} \langle \psi, a^*_x A_{x,y} a_x \psi\rangle,
\end{align*}
and analogously for expectation values $\langle \cdot \rangle_P$.
%Now
% that we have symmetrically $a^*_x$ and $a_x$ on the left and right of $A_{x,y}$,
%we can employ an upper bound for $A_{x,y}$.
Again diagonalizing $n_x$ and $n_y$ simultaneously, and using that $\sqrt{1-t} \geq 1 - \frac{t}{2} - \frac{t^2}{2}$ for all $t \in [0,1]$, we find
\[A_{x,y} \leq \frac{1}{8S^2}\left(n_x^2+n_y^2\right).\]
%\todo{normalordering produces lower powers of $\rho$ which are worse for large $\tilde\beta$}
% Thus we get an expression containing only $n_x$ and $n_y$.
We then write $a^*_x (n_x^2+n_y^2) a_x = n_x(n_x-1)^2+n_x n_y^2$. Now, since $n_x$ and  $n_y$ both commute with $P$, we can drop the $P$s for an upper bound, arriving at
\[
\begin{split}
\frac{1}{\lcal^d} \langle R_\lambdal \rangle_P
% & \leq \frac{1}{\lcal^d} \tr \frac{1}{8S} \sum_{(x,y)\subset \lambdal} \left((n_x+1)^2 n_x + n_y^2 n_x\right) \frac{P e^{-\beta T^\tD_\lambdal}P}{\tr e^{-\beta \tdl} P}\\
 \leq \frac{1}{\lcal^d} \frac{1}{8S^2} \sum_{(x,y)\subset \lambdal} \tr \left( n_x(n_x-1)^2 + n_y^2 n_x\right) \frac{e^{-\tb T^\tD_\lambdal}}{\tr e^{-\tb \tdl}} \frac{\tr e^{-\tb \tdl}}{\tr e^{-\tb \tdl}P}.
\end{split}
\]
Using Wick's theorem this is expanded in terms of $\rho(x)$, $\rho(y)$, and $\rho(x,y)$, and then estimated.
% We use the Wick theorem to evaluate the expectation. The result is
% \begin{equation}\label{eq:boundonR}
% \begin{split}
% & \frac{1}{\lcal^d} \langle R_\lambdal \rangle_P \leq \frac{1}{\lcal^d} \frac{C}{S} N_P \sum_{(x,y)\subset \lambdal} \left(\rho(x)^3 + \rho(x)^2 + \rho(x) + \rho(y)^2\rho(x) + \rho(y) \rho(x,y) \rho(y,x)\right).
% \end{split}\nn
% \end{equation}
% The proof is completed like the proof of Lemma \ref{lem:lem24}.
\end{proof}
\emph{Term III.} This splits into a term which cancels the $\tdl$ of Term I, and an error term, i.\,e.\ 
\begin{align*}
 \text{III} & = - \frac{N_P}{\lcal^d}  \langle \tdl \rangle + \frac{N_P}{\lcal^d}  \langle \tdl(1-P) \rangle, \quad \left\lvert \frac{1}{\lcal^d} \langle \tdl(1-P) \rangle \right \rvert \leq  \langle 1-P\rangle^{1/2} \left(\frac{1}{\lcal^{2d}}\langle (\tdl)^2 \rangle \right)^{1/2}.
\end{align*}
%\begin{lem}
Using the momentum space creation/annihiliation operators and Wick's theorem, we find
\[\frac{1}{\ell^{2d}}\langle (\tdl)^2 \rangle = \left(\frac{1}{\ell^d} \sum_{k \in \lambdaldual} \frac{\varepsilon(k)}{e^{\tb \varepsilon(k)}-1} \right)^2 + \frac{1}{\ell^{2d}}\sum_{k\in\lambdaldual} \frac{\varepsilon(k)^2 e^{\tb \varepsilon(k)}}{\left(e^{\tb\varepsilon(k)}-1\right)^2}
\leq C \tb^{-2},\]
by (crudely) estimating $\varepsilon(k)(e^{\tb \varepsilon(k)}-1)^{-1} \leq \tb^{-1} \sup_{x>0} x(e^x-1)^{-1}$, and similarly for the second sum.
% \leq C \left(\tb^{-(d+2)}+\frac{\tb^{-(d+4)/2}}{\ell^d} \right),\]
%for our choice of $\ell$, depending on powers of $\tb$, actually the first term becomes leading, i.e. we get $\tb^{-5/2}.
% where the estimate can be verified approximating the sums by integrals and controlling the error analogously to Lemma \ref{lem:continuumapproximation} (see also \eqref{eq:result}ff).
% where the estimate can be verified approximating the sums by integrals (on top of the error for replacing a Riemann sum by an integral there is an additional lower order (by $\ell^{-1}$) error because summands with $k_j=0$ (for one or more $j=1,\ldots,d$) are missing in $\lambdaldual$).
% , which is optimal.
%\end{lem}
%\todo{is there a $log$-singularity for two dimensions?}
% \begin{proof}
%  Using the Fourier transformed creation and annihiliation operators, the formula in terms of the sums is a obtained via the Wick theorem. For the bound, notice that
%  \[\frac{1}{\lcal^d}\sum_{p\in \lambdaldual} \frac{\varepsilon(p)}{e^{\tilde \beta \varepsilon(p)}-1} \leq \int_{[-\pi,\pi]^d} \frac{\di^d k}{(2\pi)^d}  \frac{\varepsilon(k)}{e^{\tilde \beta \varepsilon(k)}-1} + \mathcal{O}(\lcal^{-1}).\]
% (On top of the error for replacing a Riemann sum by an integral there is an additional $\mathcal{O}(\lcal^{-1})$-error because summands with $k_j=0$ (for one or more $j=1,\ldots,d$) are missing.) The second summand on the r.\,h.\,s.\ likewise is of order $\ell^{d}$.
% \end{proof}

\medskip

\emph{Term IV.} We have
\[\text{IV} = -\frac{1}{\tb \lcal^d} \log \tr e^{-\tb \tdl} + \frac{1}{\tb \lcal^d} \log N_P
%\left( \frac{\tr e^{-\tb \tdl}}{\tr P e^{-\tb \tdl}} \right)
.\]

\emph{Taking parts I through IV together,} we obtain
\[\begin{split}
   \frac{f^\text{D}(S,\beta,\lambdal)}{S} \leq - \frac{1}{\tb \lcal^d} \log \tr e^{-\tb \tdl} + \frac{1}{\lcal^d} \langle I_{\lambdal} \rangle + \mathcal{E},\\
  \end{split}
\]
% \begin{equation}
% f^\text{D}(S,\beta,\Lambda_\lcal) \leq \frac{1}{\lcal^d} \langle I_\lambdal \rangle N_P - \frac{1}{\beta \lcal^d} \log \tr e^{-\beta \tdl} + \mathcal{E},\nn
% \end{equation}
 where the terms collected in the error $\mathcal{E}$ can be estimated as
\[\begin{split}
   \lvert\mathcal{E}\rvert & \leq (N_P-1) \frac{\langle I_{\lambdal} \rangle}{\ell^d}+\frac{\log N_P}{\tb \lcal^d}
   + C \langle 1-P\rangle^{1/2} \left[\sup_{x \in \lambdal} \left(\rho(x)+1\right)^{3/2} \rho(x)^{1/2}+N_P \tb^{-1} \right]\\ & \quad + \frac{C N_P}{S^2}\sup_{x\in \lambdal} (\rho(x)+1)^2 \rho(x).
%    + C N_P
%    %\left(
%    \left( \frac{(\log \ell)^{3(3-d)}}{S^2} +  \langle 1-P\rangle^{1/2}\right)\sup_{x\in \lambdal} (\rho(x)+1)^2\rho(x). %+\frac{\langle 1-P \rangle^{1/2}  \langle (\tdl)^2\rangle^{1/2}}{\lcal^d} \right).
  \end{split}\]
% 
% We arrive at the following upper bound for the free energy:
% % where
% \[\begin{split}
%    \lvert \mathcal{E}' \rvert &\leq 4 e^{-\lambda_d S} \langle I_{\lambdal} \rangle + \frac{\log\left(1+4\lcal^d e^{-\lambda_d 2S} \right)}{\beta \lcal^d} + (1+4\lcal^d e^{-\lambda_d 2 S})C\left[ S \lcal^{d/2} e^{-\lambda_d S} + \frac{(\log \ell)^{3(3-d)}}{S}
% %   + \lcal^{d/2} e^{-\lambda_d S} S
% %    \cdot \frac{1}{\lcal^d}\sum_{p\in \lambdaldual} \frac{\varepsilon(p)}{e^{\tilde \beta \varepsilon(p)}-1}
%    \right].
%   \end{split}
% \]
% The summand in the error with the logarithm can be simplified using $\log(1+x) \leq x$.
Furthermore, the integral approximation of the leading term is
\[\begin{split}
-\frac{1}{\tb \lcal^d} \log \tr e^{-\tb \tdl} & = \frac{1}{\tb \lcal^d} \sum_{k\in \lambdaldual} \log \left( 1-e^{-\tb \varepsilon(k)}\right) \leq \frac{1}{\tb}\int_{[-\pi,\pi]^d} \frac{\di^d k}{(2\pi)^d} \log \left( 1-e^{-\tilde\beta \varepsilon(k)}\right) + \frac{C}{\tb \lcal}  
  \end{split}
\]
and thus, employing Lemmas \ref{lem:rhobound} through \ref{lem:normalizationfactor}, we arrive at the following proposition:
\begin{prp}[Preliminary upper bound]\label{prp:preliminarybound}
The free energy has the upper bound
\[\frac{f^\text{D}(S,\beta,\lambdal)}{S} \leq \frac{1}{\tilde \beta} \int_{[-\pi,\pi]^d} \frac{\di^d k}{(2\pi)^d} \log \left( 1-e^{-\tilde \beta \varepsilon(k)} \right) + \frac{1}{\lcal^d} \langle I_\lambdal \rangle + \mathcal{E}',\]
where the error term $\mathcal{E}'$ satisfies
\[\begin{split}
   \lvert \mathcal{E}' \rvert &\leq \frac{C}{\tilde \beta \ell}+
   C \left[ \ell^d (2S+1) \left(C \tb^{-d/2} \right)^{2S} \log(\ell)^{(3-d)(2S+4)} \right]^{1/2} \left(\langle I_\lambdal \rangle + 1\right) + \frac{C (\log \ell)^{(3-d)3}}{\tb^d S^2}.
%    4 {e^{-\lambda_d S}} \langle I_{\lambdal} \rangle + \frac{4}{\tilde \beta} e^{-\lambda_d 2S}  + C(1+4\lcal^d e^{-\lambda_d 2 S})\left[ \frac{(\log \ell)^{3(3-d)}}{S^2} + \lcal^{d/2} e^{-\lambda_d S}\right].
% %    + \lcal^{d/2} e^{-\lambda_d S} \cdot \frac{1}{\lcal^d}\sum_{p\in \lambdaldual} \frac{\varepsilon(p)}{e^{\tilde \beta \varepsilon(p)}-1}\right].
  \end{split}
\]
\end{prp}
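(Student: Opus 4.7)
The plan is to collect the term-by-term analyses of I--IV carried out above, exploit the cancellation between the kinetic pieces of I and III, replace the resulting finite momentum sum by its integral, and absorb all residual contributions into $\mathcal{E}'$ using Lemmas \ref{lem:rhobound}--\ref{lem:II} and Corollary \ref{lem:normalizationfactor}.

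First, I would notice that term I contains $(N_P/\ell^d)\langle T^\tD_\lambdal + I_\lambdal\rangle$, while the leading part of term III is $-(N_P/\ell^d)\langle T^\tD_\lambdal\rangle$; these add to $(N_P/\ell^d)\langle I_\lambdal\rangle$, and writing $N_P = 1 + (N_P-1)$ separates the desired contribution $(1/\ell^d)\langle I_\lambdal\rangle$ from an $(N_P-1)\langle I_\lambdal\rangle/\ell^d$ piece that will be absorbed into $\mathcal{E}'$. Next, I would apply the Riemann-sum estimate displayed just before the proposition, comparing $-(1/\tb\ell^d)\log\tr e^{-\tb\tdl}$ to the stated integral at cost $C/(\tb\ell)$; the logarithmic singularity of $\log(1-e^{-\tb\varepsilon(k)})$ at $k=0$ is integrable in every dimension $d\geq 1$ and can be handled as in the $d=2$ case of Lemma \ref{lem:rhobound}.

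All remaining pieces are then assembled into $\mathcal{E}'$: the $(1-P)$-cross terms $\langle H^\tD_{0,\lambdal}(1-P)+\text{h.c.}\rangle_P/(S\ell^d)$ from term I (bounded by Lemma \ref{lem:lem24}), the $\langle\tdl(1-P)\rangle/\ell^d$ piece from term III (bounded by the Cauchy-Schwarz estimate displayed above), the $(1/\tb\ell^d)\log N_P$ piece of term IV together with the isolated $(N_P-1)\langle I_\lambdal\rangle/\ell^d$ (both estimated through Corollary \ref{lem:normalizationfactor} and Lemma \ref{lem:oneminusp}), and finally the main error $\langle R_\lambdal\rangle_P/\ell^d$ from Lemma \ref{lem:II}. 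Substituting $\rho(x)\leq C\tb^{-d/2}(\log\ell)^{3-d}$ from Lemma \ref{lem:rhobound} and the exponential bound $\langle 1-P\rangle \leq e\ell^d(2S+1)(C\tb^{-d/2})^{2S}(\log\ell)^{(3-d)2S}$ from Lemma \ref{lem:oneminusp}, and using $N_P - 1 \leq 2\langle 1-P\rangle$, the pieces organize precisely into the three stated summands: $C/(\tb\ell)$ from the Riemann-sum step, the bracketed $\langle 1-P\rangle^{1/2}$-weighted term collecting all $(1-P)$-contributions and $N_P$-corrections (the factor $\langle I_\lambdal\rangle + 1$ appearing because $\langle I_\lambdal\rangle/\ell^d$ multiplies $(N_P-1)$), and the $1/S^2$ term originating from $\langle R_\lambdal\rangle_P$.

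The main bookkeeping obstacle is tracking how $\log\ell$-factors combine in $d=2$: the $(\log\ell)^{2S}$ inside $\langle 1-P\rangle$ multiplies the fixed power coming from $\sup_x(\rho(x)+1)^{3/2}\rho(x)^{1/2}$ to produce the exponent $(3-d)(2S+4)$ inside the bracket (with the $+4$ matching the two $\rho$-factors after squaring), while the $\sup_x(\rho(x)+1)^2\rho(x)$ factor from Lemma \ref{lem:II} yields $(\log\ell)^{3(3-d)}$ in the $S^{-2}$ term. Otherwise the proof is a mechanical assembly of the estimates already in place.
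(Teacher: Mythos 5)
Your proposal is correct and follows essentially the same route as the paper: the same cancellation of the kinetic parts of terms I and III via $N_P = 1 + (N_P-1)$, the same Riemann-sum comparison for the leading term, and the same absorption of the $(1-P)$-terms, $\log N_P$, and $\langle R_\lambdal\rangle_P$ into $\mathcal{E}'$ via Lemmas \ref{lem:rhobound}--\ref{lem:II} and Corollary \ref{lem:normalizationfactor}, with the $\log\ell$-exponent bookkeeping matching the paper's.
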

From the last term we see that the error in the best case will be of order $\tb^{-d}S^{-2}$ (for $d=2$ with a logarithmic correction); to make also the first term and the error from Proposition \ref{prp:twoeight} (see below) that small \emph{we need to choose $\lcal=S^2 \tb^{d}$}. The middle term is exponentially small in $S$ (provided $\tb$ is so large that $C\tb^{-d/2} < 1$, c.\,f.\ Lemma \ref{lem:oneminusp}).
% , i.\,e.\ exponentially small as $S \to \infty$ and at the same time smaller than any power law $\tb^{-m}$ ($m$ fixed).
% The size of the expectation of $I_\lambdal$ in the error estimate will become clear in the next section when we evaluate it explicitly  (providing us at the same time with the next-to-leading order for the main result).

To complete the proof of our theorem, it remains to calculate $\langle I_{\lambdal}\rangle$.

\subsection{Evaluating the Energy Correction}
Here we calculate $\lcal^{-d} \langle I_\lambdal \rangle$. With periodic boundary conditions this is formally simple, but involves infinities for $k=0$. As before, for the rigorous proof we have Dirichlet boundary conditions, making the evaluation somewhat more complicated. In particular, we will find finite-size errors (smaller by $1/\ell$ compared to the leading `bulk' term).
\begin{prp}[First order correction]\label{prp:twoeight} We have
\[\frac{1}{\ell^{d}} \langle I_\lambdal \rangle \leq - \frac{1}{S} \frac{1}{4 d} \left( \int_{[-\pi,\pi]^d} \frac{\di^d k}{(2\pi)^d} \frac{\varepsilon(k)}{e^{\tilde \beta \varepsilon(k)}-1}\right)^2 + \frac{C(\log \ell)^{3-d}}{S\ell}.\]
%In particular, its absolute value is bounded by a constant (independent of $\ell$ and $S$).
\end{prp}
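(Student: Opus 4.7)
The plan is to compute $\langle I_{\Lambda_\ell}\rangle$ exactly via Wick's theorem and then show that the resulting Dirichlet lattice sum equals its infinite-volume integral analogue up to a boundary correction of order $(\log\ell)^{3-d}/\ell$. The state $\langle\cdot\rangle$ is the Gibbs state of $T^{\text{D}}_{\Lambda_\ell}$, a particle-number preserving quadratic bosonic operator with real symmetric kernel. Hence it is quasi-free and gauge invariant: $\langle aa\rangle=\langle a^*a^*\rangle=0$, and $\rho(x,y) = \langle a^*_y a_x\rangle = \sum_{k\in\lambdaldual}\varphi_k(x)\varphi_k(y)/(e^{\tilde\beta\varepsilon(k)}-1)$ is real and symmetric in $(x,y)$. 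A direct computation for nearest neighbours $x\neq y$ gives $\langle a^*_x n_x a_y\rangle = 2\rho(x)\rho(x,y)$, $\langle a^*_x n_y a_y\rangle = 2\rho(y)\rho(x,y)$, and $\langle n_x n_y\rangle = \rho(x)\rho(y)+\rho(x,y)^2$. Substituting into the definition of $I_{\Lambda_\ell}$ and collecting terms yields the exact identity
\[\langle I_{\Lambda_\ell}\rangle = -\frac{1}{S}\sum_{\langle x,y\rangle\subset\Lambda_\ell}\bigl[\rho(x)-\rho(x,y)\bigr]\bigl[\rho(y)-\rho(x,y)\bigr].\]

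To isolate the bulk contribution I would use $2\sin(k_j x_j)\sin(k_j y_j) = \cos(k_j(x_j-y_j))-\cos(k_j(x_j+y_j))$ factor by factor to decompose $\rho(x,y) = K(x-y) - R(x,y)$, where $K(u):= (\ell+1)^{-d}\sum_k n_k \prod_j \cos(k_j u_j)$ with $n_k:=(e^{\tilde\beta\varepsilon(k)}-1)^{-1}$ is translation invariant, and $R(x,y)$ collects the $2^d-1$ reflection pieces, each carrying at least one oscillatory factor $\cos(k_j(x_j+y_j))$. Inserting only the $K$-piece into the bond product and using directional symmetry gives $\sum_\mu (\ell-1)\ell^{d-1}[K(0)-K(e_\mu)]^2$. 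Since $(1-\cos k_\mu)n_k$ is uniformly bounded and smooth on $[-\pi,\pi]^d$, the Riemann-sum error is $O(\ell^{-1})$, so $K(0)-K(e_\mu) = \int_{[-\pi,\pi]^d}(1-\cos k_\mu)n_k \,\di^d k/(2\pi)^d + O(\ell^{-1})$. Summing over $\mu$ and using $\sum_\mu(1-\cos k_\mu) = \varepsilon(k)/2$ together with the symmetry between directions identifies $K(0)-K(e_\mu) = A/(2d)+O(\ell^{-1})$, where $A$ denotes the integral in the proposition. Dividing by $\ell^d$ converts the bulk contribution into $-A^2/(4dS)+O((S\ell)^{-1})$, the claimed leading error term.

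What remains are the contributions involving at least one factor of $R$ (including the boundary corrections to $\rho(x)=K(0)-R(x,x)$). Here the crucial observation is that summing a single reflection factor $\cos(k_j(x_j+y_j))$ over $x_j\in\{1,\ldots,\ell\}$ yields a geometric-series cancellation of size $O(|\sin k_j|^{-1})$, which provides an additional $1/\ell$ gain over the bulk term after integrating against the smooth remaining kernel. For $d=3$ the resulting $k$-integrals of $n_k$ against bounded oscillating factors converge absolutely, giving the desired $O(1/\ell)$ correction. For $d=2$ the infrared behaviour $n_k\sim 1/(\tilde\beta|k|^2)$ produces a logarithmic divergence when cut off at the Dirichlet spacing $|k|\sim 1/\ell$, yielding the extra $\log\ell$ factor. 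The main obstacle is the systematic bookkeeping of the $O(4^d)$ cross-products between $K$- and $R$-pieces arising when both $\rho(x)-\rho(x,y)$ and $\rho(y)-\rho(x,y)$ are expanded, and checking that destructive interference among them never spoils the per-bond estimate so that absolute-value bounds suffice uniformly in $x,\mu$. Keeping track of signs so that the bulk part enters with the correct (negative) sign then produces the stated upper bound.
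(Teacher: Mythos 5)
Your proposal follows essentially the same route as the paper: Wick's theorem (your identity $\langle I_{\Lambda_\ell}\rangle=-S^{-1}\sum_{\langle x,y\rangle}[\rho(x)-\rho(x,y)][\rho(y)-\rho(x,y)]$ is the paper's Wick expansion, compactly rearranged), the sine-eigenfunction representation of $\rho(x,y)$, separation of a translation-invariant bulk piece from reflection pieces, directional symmetry to produce $\varepsilon(k)/2d$, and a Lipschitz/Riemann-sum comparison for the bulk (the paper's Lemma \ref{lem:continuumapproximation}). Your reflection terms are exactly what the paper evaluates in closed form in Lemma \ref{lem:quartictrigonometry}, where they become Kronecker deltas $\delta_{k_j,k'_j}$ and $\delta_{k_j+k'_j,\pi}$, each of which removes one momentum sum and hence yields the $(\log\ell)^{3-d}/\ell$ correction.

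There is, however, one concrete quantitative gap in the error analysis as you state it: the generic oscillatory bound $\bigl\lvert\sum_{x_j=1}^{\ell}e^{2ik'_jx_j}\bigr\rvert=O(\lvert\sin k'_j\rvert^{-1})$ is not good enough. If you integrate $\lvert\sin k'_j\rvert^{-1}\sim\lvert k'_j\rvert^{-1}$ against the occupation $n_{k'}\sim(\tb\lvert k'\rvert^{2})^{-1}$, the resulting momentum sum is not even logarithmically convergent for $d=2$ (the radial integral $\int_{1/\ell}\di r/r^{2}$ produces a factor $\ell$ that destroys the $1/\ell$ gain entirely), and it produces spurious logarithms for $d=3$. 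What saves the argument is that at the Dirichlet momenta $k'_j=\pi n/(\ell+1)$ the geometric sum $\sum_{x=1}^{\ell}e^{2ik'_jx}$ equals exactly $-1$, i.e.\ it is $O(1)$ uniformly in $k'_j$; equivalently, the sums either vanish identically or collapse to Kronecker deltas, which is precisely the content of the paper's Lemma \ref{lem:quartictrigonometry}. So the "bookkeeping" you defer genuinely requires these exact lattice-sum evaluations rather than absolute-value oscillatory estimates; with that replacement your argument closes and reproduces the stated bound, including the correct source of the $\log\ell$ for $d=2$ (namely $\ell^{-2}\sum_{k}n_k\sim\tb^{-1}\log\ell$).
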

% Combining this result with Proposition \ref{prp:preliminarybound}, the proof of our theorem is complete.

\begin{proof}
 Recall that according to \eqref{eq:formalexpansion} we have
 \begin{equation}\label{eq:termI}
  I_\lambdal = \frac{1}{4S}\sum_{\langle x,y\rangle \subset \lambdal} \left( a^*_x a^*_x a_x a_y + a^*_x a^*_y a_y a_y + a^*_y a^*_x a_x a_x + a^*_y a^*_y a_y a_x - 4 a^*_x a^*_y a_x a_y \right).
 \end{equation}
 By Wick's theorem
\[\begin{split} \langle I_\lambdal\rangle  = \frac{1}{S}\sum_{\langle x,y\rangle \subset \lambdal} \left( \rho(x,x)\rho(x,y) + \rho(x,x)\rho(y,x) - \rho(x,x)\rho(y,y) - \rho(x,y) \rho(y,x)\right).
  \end{split}
\]
We use the Fourier representation \eqref{eq:deffouriertransform} of $a^*_x$ and $a_x$ to calculate $\rho(y,x)$, giving
\[\rho(y,x) = \tr a^*_x a_y \gibbs = \sum_{k,k' \in \lambdaldual} \varphi_k(x) \varphi_{k'}(y) \tr a^*_k a_{k'} \gibbs = \sum_{k \in \lambdaldual} \varphi_k(x) \varphi_k(y) \frac{1}{e^{\tilde \beta \varepsilon(k)}-1}.\]
% We have the Bose-Einstein distribution $\tr a^*_p a_{p'} e^{-\beta \tdl}/\tr e^{-\beta \tdl} = \delta_{p,p'} (e^{\tilde \beta \varepsilon(p)}-1)^{-1}$. Thus
% \[\rho(y,x) \]
Thus we obtain
% \footnote{Notice that for $x_j=\ell+1$ the point $x_j + \delta_{j,j}$ is outside the box---however, in this case in the above expression there is always a factor $\sin((l+1)p_j) = 0$ going along, so contributions of that point actually vanish. We are also free to include or not include the points $x_j=0$, since their contribution is vanishing as well.}
(using the abbreviation $f(k)=(e^{\tilde \beta \varepsilon(k)}-1)^{-1}$)
\[\begin{split}
\langle I_\lambdal\rangle &= - \frac{1}{S}\sum_{\langle x,y\rangle \subset \lambdal}\sum_{k,k' \in \lambdaldual} f(k) f(k')\big[\varphi_k(x)^2 \varphi_{k'}(y)^2 - 2 \varphi_k(x)^2 \varphi_{k'}(x) \varphi_{k'}(y)  + \varphi_k(x)\varphi_k(y)\varphi_{k'}(x)\varphi_{k'}(y)\big]\\
&= - \frac{1}{S}\sum_{\substack{x_j=1,\ldots,\ell\\j=1,\ldots,d}} \sum_{i=1}^d \sum_{k,k'\in\lambdaldual} f(k) f(k')\left(\frac{2}{\ell+1}\right)^{2d}\\
&\quad\times \big[\prod_{j=1}^d \sin(x_j k_j)^2 \sin((x_j+\delta_{i,j})k'_j)^2+2\prod_{j=1}^d \sin(x_j k_j)^2 \sin(x_j k'_j) \sin((x_j+\delta_{i,j})k'_j)\\
&\quad\quad+\prod_{j=1}^d \sin(x_j k_j) \sin((x_j+\delta_{i,j})k_j)\sin(x_j k'_j)\sin((x_j+\delta_{i,j})k'_j)\big].
\end{split}\]
A little regrouping of the last expression yields
\begin{equation}\label{eq:prevstep}
\begin{split}
   \langle I_\lambdal\rangle & = - \frac{1}{S}\sum_{k,k'\in \lambdaldual} f(k)f(k') \left(\frac{2}{\ell+1}\right)^d \sum_{i=1}^d \prod_{j\in\{1,\ldots,d\}\backslash\{i\}} \left( \frac{2}{\ell+1} \sum_{x_j=1}^\ell\sin(x_j k_j)^2\sin(x_j k'_j)^2\right)\\
   &\quad \times \frac{2}{\ell+1}\sum_{x_i=1}^\ell\big[ \sin(x_i k_i)^2 \sin((x_i+1)k'_i)^2 - 2 \sin(x_i k_i)^2 \sin(x_i k'_i)\sin((x_i+1)k'_i)\\
   &\hspace{3cm}+\sin(x_i k_i)\sin((x_i+1)k_i)\sin(x_i k'_i)\sin((x_i+1)k'_i) \big].
  \end{split}
\end{equation}
%%%%%%%%%%%%%%%%%%%%%%%%%%%%%%%%%%%%%%%%%%%%%%%%%%%%%%
% If instead of $x_i + 1$ in the previous formula we express it as $x_i+\delta$ and sum over $\delta = \pm1$, then we get a sign in using the trigonometric identity next,
% and then all the terms vanishing in the next equation actually vanish because the $\delta = +1$ and $\delta=-1$ summands cancel each other.
%
%
%%%%%%%%%%%%%%%%%%%%%%%%%%%%%%%%%%%%%%%%%%%%%%%%%%%%%%
Let us for the moment look only at the last factor of \eqref{eq:prevstep}, and expand it using $\sin(x_i k_i + k_i)=\sin(x_i k_i)\cos(k_i)+\cos(x_i k_i)\sin(k_i)$. Using Lemma \ref{lem:vanishing}(i) we find
\begin{align*}
& \frac{2}{\ell+1}\sum_{x_i=1}^{\ell}\big[ \sin(x_i k_i)^2 \sin((x_i+1)k'_i)^2 - 2 \sin(x_i k_i)^2 \sin(x_i k'_i)\sin((x_i+1)k'_i)\nn\\
   &\hspace{2.2cm}+\sin(x_i k_i)\sin((x_i+1)k_i)\sin(x_i k'_i)\sin((x_i+1)k'_i) \big]\nn\\
   & = \frac{2}{l+1}\sum_{x_i=1}^{\ell} \big[\sin(x_i k_i)^2 \sin(x_i k'_i)^2 \cos(p'_i)^2
%    & \hspace{2.7cm}+ 2 \sin(x_i p_i)^2 \sin(x_i p'_i) \cos(x_i p'_i) \cos(p'_i) \sin(p'_i)\label{eq:ii}\\
%    & \hspace{2.7cm}
   + \sin(x_i k_i)^2 \cos(x_i k'_i)^2 \sin(k'_i)^2\\
   & \hspace{2.7cm}- 2 \sin(x_i k_i)^2 \sin(x_i k'_i)^2 \cos(k'_i)
%    & \hspace{2.7cm}- 2 \sin(x_i p_i)^2 \sin(x_i p'_i) \cos(x_i p'_i) \sin(p'_i)\label{eq:v}\\
%    & \hspace{2.7cm}
   + \sin(x_i k_i)^2 \sin(x_i k'_i)^2 \cos(k_i) \cos(k'_i)\\
%    &\hspace{2.7cm} + \sin(x_i p_i) \cos(x_i p_i) \sin(x_i p_i)^2 \sin(p_i) \cos(p'_i)\label{eq:vii}\\
%    & \hspace{2.7cm}+ \sin(x_i p_i)^2 \sin(x_i p'_i) \cos(x_i p'_i) \cos(p_i) \sin(p'_i)\label{eq:viii}\\
   & \hspace{2.7cm}+ \sin(x_i k_i) \cos(x_i k_i) \sin(x_i k'_i)\cos(x_i k'_i)\sin(k_i)\sin(k'_i)\big].
\end{align*}
% The contribution of \eqref{eq:ii}, \eqref{eq:v}, \eqref{eq:vii} and \eqref{eq:viii} is vanishing, see . Having observed this,
Thereafter we further evaluate \eqref{eq:prevstep} using Lemma \ref{lem:quartictrigonometry}(ii). Then we use the symmetry between $k$ and $k'$ to write $-2\cos(k'_i)$ as $-\cos(k_i)-\cos(k'_i)$. After these steps we have
\[
% \label{eq:product}
\begin{split}
 \langle I_{\lambdal} \rangle & = - \frac{1}{S}\sum_{k,k' \in \lambdaldual} f(k) f(k') \left(\frac{2}{\ell+1}\right)^d\sum_{i=1}^d\prod_{j\in\{1,\ldots,d\}\backslash \{i\}} \left( \frac{1}{2} + \frac{1}{4}\left(\delta_{k_j,k'_j} + \delta_{k_j+k'_j,\pi} \right) \right)\\
  &\quad\quad\quad \times \frac{1}{2}\bigg\{ \left(1-\cos(k_i)\right)\left(1-\cos(k'_i)\right)
%   \\
%   &\quad\quad\ +
%   %\frac{1}{4}\left(\delta_{p_j,p'_j} + \delta_{p_j+p'_j,\pi} \right)\left(\cos(p'_i)^2-\sin(p'_i)^2 - 2\cos(p'_i)+\cos(p_i)\cos(p'_i)\right)
  +\delta_{k_i,k'_i}\left(\cos(k_i)^2-\cos(k_i)\right) + \delta_{k_i+k'_i,\pi}\cos(k_i)
  \bigg\}.
\end{split}
\]
Next we use symmetry among $k_1, \ldots, k_d$ to replace the sum over $i=1,\ldots,d$ by a factor of $d$. Since deltas eliminate a sum, every factor of a delta\footnote{These deltas constitute the finite-size errors mentioned before, and are a consequence of using Dirichlet boundary conditions. (They do not appear when formally using periodic boundary conditions.)} is effectively of order $1/\ell$. However, for $d=2$, some of the sums appearing are $\log \ell$-divergent at small $k$ and $k'$.
%Writing $\epsilon_d(\tb)$ for (from line to line differing) error terms satisfying $\lvert \epsilon_d(\tb)\rvert \leq C$, we find
The resulting estimate is
%\todo{Can we do without log-singularities in the 2D case?}
% \footnote{One has to be a bit careful here: Expande the product in \eqref{eq:product} and look at all but the leading term. Terms containing a factor $(1-\cos(p_1))(1-\cos(p'_1))$ have a bounded function as the summand, so we use its maximum to bound them; this also works for terms with a $\delta_{p_1+p'_1,\pi(\ell+1)}$. The other terms are singular; however an expansion for small $\lvert p\rvert$ shows that the integrals are finite.}
% Remaining are the terms containing $\delta_{p_1,p_1'}(2\cos(p_1)^2-\sin(p_1)^2-2\cos(p_1))$, e.\,g.\ 
% \[-d \sum_{p,p'\in\lambdaldual} f(p) f(p') \left(\frac{2}{\ell+1} \right)^d \frac{1}{2^{d+1}} \delta_{p_1,p_1'}(2\cos(p_1)^2-\sin(p_1)^2-2\cos(p_1)).\]
% The trigonometric expression in the numerator expands to $-2 p_1^2$ for small $p$. Expanding $f(p)$ and $f(p')$ we get $(p_1^2+p_2^2 + p_3^2)^{-1} (p^2_1+p'^2_2 + p'^2_3)^{-1}$ (or in two dimensions $(p_1^2+p_2^2)^{-1} (p^2_1+p'^2_2)^{-1}$). One of these two factors is compensated by the $p_1^2$ in the numerator, what remains is integrable.)
\begin{equation}\label{eq:result}
\begin{split}
\frac{1}{\ell^{d}}\langle I_{\lambdal} \rangle & \leq - \frac{d}{S\ell^{d}} \sum_{k,k' \in \lambdaldual} f(k) f(k') \left(\frac{2}{\ell+1}\right)^d \frac{1}{2^d}\left(1-\cos(k_1)\right)\left(1-\cos(k'_1)\right) + C S^{-1}(\log \ell)^{3-d}\ell^{-1}\\
&= -\frac{d}{S} \Bigg[\frac{1}{(\ell+1)^d}\sum_{k \in \lambdaldual} f(k) \left(1-\cos(k_1) \right) \Bigg]^2  + C S^{-1}(\log \ell)^{3-d}\ell^{-1}.
\end{split}
\end{equation}
Again we use the symmetry among $k_1,\ldots,k_d$, now to replace $(1-\cos(k_1))$ by $\frac{1}{2d}\varepsilon(k)$.% in \eqref{eq:result}.

It remains to employ the continuum approximation for $g(k) := f(k)\varepsilon(k)$: Lemma \ref{lem:continuumapproximation} with $D_1 = \tb^{-1} \sup_{x\in\Rbb} x(e^x-1)^{-1}$ and $D_2 = \sup_{k\in \lambdaldual} \lvert \nabla g(k)\rvert \leq 2\sqrt{3} \sup_{x\in\Rbb} \lvert x e^x +1 - e^x\rvert(e^x-1)^{-2}$ implies 
\[\frac{1}{\ell^{d}} \langle I_{\lambdal}\rangle \leq -\frac{1}{4 d S}\left[ \int_{[0,\pi]^d} \frac{\di^d k}{\pi^d} \frac{\varepsilon(k)}{e^{\tb \varepsilon(k)}-1} \right]^2 + CS^{-1} (\log \ell)^{3-d}\ell^{-1}.\]
Since $\varepsilon(k)$ depends on $k$ only through $\cos(k_i)$, the integral remains unchanged by any reflection $k_i \mapsto -k_i$.  Thus the integral over $[0,\pi]^d$ is the same as $2^{-d}$ times the integral over $[-\pi,\pi]^d$.
\end{proof}

The next two lemmas were used in the previous proof.
\begin{lem}\label{lem:vanishing}\label{lem:quartictrigonometry}
Let $k, k' \in \lambdaldual$ and $i=1,\ldots d$.
\begin{enumerate}
 \item  We have
\[\sum_{x_i=1}^{\ell} \sin(x_i k_i)^2 \sin(x_i k'_i) \cos(x_i k'_i) = 0.\]
\item Furthermore
 \begin{align*}& \frac{2}{\ell+1}\sum_{x_i=1}^{\ell} \sin(x_i k_i)^2 \sin(x_i k'_i)^2 = \frac{1}{2}+\frac{1}{4}\left( \delta_{k_i,k'_i} + \delta_{k_i+k'_i,\pi}\right),\\
 & \frac{2}{\ell+1}\sum_{x_i=1}^{\ell} \sin(x_i k_i)^2 \cos(x_i k'_i)^2 = \frac{1}{2}-\frac{1}{4}\left( \delta_{k_i,k'_i} + \delta_{k_i+k'_i,\pi} \right),\\
 & \frac{2}{\ell+1}\sum_{x_i=1}^{\ell} \sin(x_i k_i) \cos(x_i k_i) \sin(x_i k'_i) \cos(x_i k'_i) = \frac{1}{4}\left( \delta_{k_i,k'_i} - \delta_{k_i+k'_i,\pi} \right).
 \end{align*}
\end{enumerate}
\end{lem}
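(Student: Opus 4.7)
\textbf{Proof proposal for Lemma \ref{lem:vanishing}/\ref{lem:quartictrigonometry}.}

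The whole lemma is a bookkeeping exercise based on two elementary ingredients: trigonometric product-to-sum identities, and the closed-form evaluation of one-dimensional Dirichlet-type trigonometric sums at Fourier frequencies $\alpha \in \pi\mathbb{Z}/(\ell+1)$. Specifically, writing $n := \ell+1$, one has for every $m \in \mathbb{Z}$
\[
\sum_{x=1}^{\ell} \sin\!\Bigl(\tfrac{2\pi m x}{n}\Bigr) = 0, \qquad
\sum_{x=1}^{\ell} \cos\!\Bigl(\tfrac{2\pi m x}{n}\Bigr) = -1 + n\,\mathbb{1}_{\{m \equiv 0 \bmod n\}},
\]
obtained by taking real and imaginary parts of the geometric sum $\sum_{x=0}^{n-1} e^{2\pi i m x/n}$ and removing the $x=0$ term. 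Since $k_i,k'_i \in \pi\{1,\ldots,\ell\}/n$, the combinations $k_i \pm k'_i$ are integer multiples of $\pi$ precisely in the following cases: $k_i-k'_i = 0 \Leftrightarrow k_i = k'_i$, and $k_i+k'_i \in \pi\mathbb{Z} \cap (0,2\pi) \Leftrightarrow k_i+k'_i = \pi$; this is the mechanism producing the two Kronecker deltas in part (ii).

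\emph{Part (i).} Using $\sin^2(a) = \tfrac{1}{2}(1-\cos 2a)$ and $\sin(a)\cos(a) = \tfrac{1}{2}\sin 2a$, the summand becomes a linear combination of $\sin(2xk'_i)$ and of $\cos(2xk_i)\sin(2xk'_i)$, the latter being expanded via $\cos A\sin B = \tfrac{1}{2}[\sin(A+B)-\sin(A-B)]$ into $\sin(2x(k_i+k'_i))$ and $\sin(2x(k'_i-k_i))$. All three resulting sums are of pure-sine type with argument $2x \cdot (\text{integer multiple of }\pi/n)$, and hence vanish by the first identity above. This gives (i).

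\emph{Part (ii).} Each of the three expressions is expanded via $\sin^2 = \tfrac{1}{2}(1-\cos 2\cdot)$, $\cos^2 = \tfrac{1}{2}(1+\cos 2\cdot)$, $\sin\cos = \tfrac{1}{2}\sin 2\cdot$, and then the products $\cos(2xk_i)\cos(2xk'_i)$ and $\sin(2xk_i)\sin(2xk'_i)$ are reduced by
\[
\cos A\cos B = \tfrac{1}{2}[\cos(A-B)+\cos(A+B)],\qquad \sin A\sin B = \tfrac{1}{2}[\cos(A-B)-\cos(A+B)].
\]
After these reductions each sum is a finite linear combination of constants, of single-frequency cosine sums $\sum_x \cos(2xk_i)$ or $\sum_x \cos(2xk'_i)$ (each equal to $-1$, since $k_i,k'_i \notin \pi\mathbb{Z}$), and of mixed cosine sums $\sum_x \cos(2x(k_i \pm k'_i))$ (each equal to $-1$ unless the delta condition $k_i=k'_i$ resp.\ $k_i+k'_i=\pi$ is met, in which case it equals $\ell$, i.e.\ picks up an extra $n = \ell+1$). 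Dividing by $n$ and collecting the constants yields the claimed values $\tfrac{1}{2} \pm \tfrac{1}{4}(\delta_{k_i,k'_i}+\delta_{k_i+k'_i,\pi})$ and $\tfrac{1}{4}(\delta_{k_i,k'_i}-\delta_{k_i+k'_i,\pi})$ respectively, the sign pattern being determined by whether the delta originates from a $\cos(A-B)$ or a $\cos(A+B)$ term in the product-to-sum expansion.

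There is no conceptual obstacle; the only risk is an arithmetic slip in tracking the $\pm 1$'s and $\tfrac{1}{2}$'s across the three expansions. To minimize this one can compute each identity independently by the same two-line recipe (expand, then apply the two Dirichlet sums), so that the cancellations producing the plain $\tfrac{1}{2}$ term are transparent.
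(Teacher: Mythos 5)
Your proposal is correct and follows essentially the same route as the paper: reduce everything by product-to-sum identities (equivalently, expansion into exponentials) to single-frequency sums at multiples of $\pi/(\ell+1)$, then evaluate those via the finite geometric sum, with the sine sums vanishing over full periods and the cosine sums producing the Kronecker deltas exactly when $k_i=k_i'$ or $k_i+k_i'=\pi$. I spot-checked all three identities in (ii) against your Dirichlet-sum bookkeeping and the constants come out right.
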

\begin{proof}
(i) Expanding the trigonometric functions in terms of exponentials we find
\begin{align*}
&\sum_{x_i=1}^\ell \sin(x_i k_i)^2 \sin(x_i k'_i) \cos(x_i k'_i) = \sum_{x_i=1}^\ell \left[\frac{\sin(2x_i(k_i-k'_i))-\sin(2 x_i(k_i+k'_i))}{4} + \frac{\sin(2 x_i k'_i)}{2} \right].% \tagg{wavefunctions}
\end{align*}
Due to the factor of $2$ in all the arguments on the r.\,h.\,s., the summation range  contains only multiples of full phases of the sine (recall that $k_i = \frac{\pi}{\ell+1}n$ for some $n \in \Zbb$), so negative and positive parts cancel. (ii) We simply expand into exponentials and use the finite geometric sum.
\end{proof}

\begin{lem}\label{lem:continuumapproximation}
 Let $g: (0,\pi]^n \to \Rbb$ be bounded above by some $D_1 < \infty$
 %. Assume that there exist $0 < a < b < \pi$ such that $g(p)$ on $[0,b]^n$ is growing w.\,r.\,t.\ all the $p_i$,
 and have Lipschitz constant $D_2 < \infty$.
 %on $[0,\pi]^n \backslash [0,a]^n$.
 Then there exists a constant $C > 0$ (depending only on the dimension $n$) such that
 \[\left(\frac{\pi}{\ell+1}\right)^n \sum_{k \in \lambdaldual} g(k) \geq \int_{[0,\pi]^n} g(k) \di^n k - C (D_1+D_2) \frac{1}{\ell+1} \quad \forall \ell \in \Nbb.\]
\end{lem}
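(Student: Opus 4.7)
The strategy is a standard Riemann-sum comparison, with the only subtlety being that the Dirichlet momentum lattice $\Lambda_\ell^* = \{\pi/(\ell+1),\ldots,\ell\pi/(\ell+1)\}^n$ contains only $\ell^n$ points and thus leaves uncovered a thin boundary slab near the faces $k_i = \pi$.

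First I would associate to each $k \in \Lambda_\ell^*$ the (half-open) cube $Q_k := k + [-\pi/(\ell+1),0)^n$. These $\ell^n$ cubes are pairwise disjoint, each has volume $(\pi/(\ell+1))^n$, and their union is the inner box $B := (0,\pi\ell/(\ell+1)]^n$. By the Lipschitz hypothesis, $g(x) \le g(k) + D_2\sqrt{n}\,\pi/(\ell+1)$ for every $x \in Q_k$. Integrating over $Q_k$, rearranging, and summing over $k$ yields
\[
\Bigl(\tfrac{\pi}{\ell+1}\Bigr)^n \sum_{k \in \Lambda_\ell^*} g(k) \;\ge\; \int_B g(x)\,d^n x \;-\; \sqrt{n}\,\pi^{n+1}\,\frac{D_2}{\ell+1},
\]
where the error term came from multiplying the pointwise Lipschitz error $D_2\sqrt{n}\,\pi/(\ell+1)$ by the total volume $(\pi\ell/(\ell+1))^n \le \pi^n$.

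Next I would control the missing contribution near the boundary. Since $g \le D_1$ and, by a Bernoulli-type inequality, $1 - (\ell/(\ell+1))^n \le n/(\ell+1)$, one has
\[
\int_{[0,\pi]^n} g(x)\,d^n x \;-\; \int_B g(x)\,d^n x \;\le\; D_1\bigl(\pi^n - (\pi\ell/(\ell+1))^n\bigr) \;\le\; n\pi^n\,\frac{D_1}{\ell+1}.
\]
Combining the two displayed inequalities gives the lemma with, for instance, $C := \sqrt{n}\,\pi^{n+1} + n\pi^n$, which depends only on $n$.

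There is no real obstacle: the argument is elementary once one sees that the two hypotheses on $g$ play \emph{different} roles. The Lipschitz constant $D_2$ governs the interior Riemann-sum error on $B$, while the upper bound $D_1$ is needed precisely because $\Lambda_\ell^*$ omits a slab of width $\pi/(\ell+1)$ adjacent to the faces $k_i = \pi$; if the lattice tiled $[0,\pi]^n$ exactly, only the $D_2$ term would appear. A minor technical point is that $g$ is only defined on $(0,\pi]^n$, but the lattice points lie strictly inside and the boundary of the cubes $Q_k$ touching $\{k_i = 0\}$ has Lebesgue measure zero, so this causes no difficulty.
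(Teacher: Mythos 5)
Your proof is correct and follows essentially the same two-step Riemann-sum argument as the paper: the Lipschitz constant controls the error over the cubes attached to the lattice points, and the upper bound $D_1$ controls the uncovered slab of width $\pi/(\ell+1)$. The only (immaterial) difference is the orientation of the cubes---the paper takes $k_0+[0,\pi/(\ell+1)]^n$, so its missing slab sits at the faces $k_i=0$ rather than $k_i=\pi$.
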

%\todo{$\tilde\beta$-dependence?}
\begin{proof} Due to Lipschitz continuity, for every $k_0 \in \lambdaldual$ we have
 \[\left( \frac{\pi}{\ell+1} \right)^n \left[ g(k_0) + D_2\sqrt{n}\frac{\pi}{\ell+1} \right] \geq \max_{k \in k_0+[0,\pi/(\ell+1)]^n} g(k) \left( \frac{\pi}{\ell+1} \right)^n \geq \int_{k_0+[0,\pi/(\ell+1)]^n} g(k)\di^n k.\]
 Thus (the factor $\ell^n$ in the numerator of the last summand being the number of boxes in the partition, or equivalently, the number of elements of $\lambdaldual$)
 \[\left(\frac{\pi}{\ell+1} \right)^n \sum_{k_0 \in \lambdaldual} g(k_0) \geq \int_{[\pi/(\ell+1),\pi]^n} g(k)\di^n k -  \frac{\pi^n\ell^n}{(\ell+1)^n} D_2 \sqrt{n} \frac{\pi}{\ell+1}.\]
 Since $g$ is bounded above, by extending the integration range from $[\pi/(\ell+1),\pi]^n$ to $[0,\pi]^n$, we make the integral larger by a quantity of at most $C D_1/(\ell+1)$ for some $C < \infty$ depending only on $n$.
\end{proof}

\appendix

\section{Appendix: Cancellation at Second Order}
\label{sec:appendix} We now consider the three dimensional case only. 
As explained in the beginning, the dependence of our error bound on $\tb$ (order $\tb^{-3}$) is not in agreement with Dyson's paper, which claims that all corrections are $\tb^{-5}$ and smaller. However, it was pointed out \cite{Oguchi} that at second order of formal perturbation theory there is a cancellation, by which Dyson's result is reproduced to order $1/S^2$ (if one corrects for a trivial numerical imprecision \cite{PD66}).

\medskip

Below we reproduce the calculation of \cite{Oguchi} in the language of modern perturbation theory and in detail. For simplicity we work in periodic boundary conditions, i.\,e.\ the eigenfunctions of the Laplacian are $\ell^{-3/2} e^{i k\cdot x}$ with momenta\footnote{All $k$-sums in this appendix are over this range.} $k \in \frac{2\pi}{\ell}\left(\Zbb^{3} \cap (-\ell,\ell)^3\right)$. We expand the Hamiltonian one order further,
\[H_{\Lambda_\ell} =: S\left( T_{\Lambda_\ell} + I_{\Lambda_\ell} + J_{\lambdal} + \tilde{R}_{\Lambda_\ell} \right), \quad \text{i.\,e.}\quad R_\lambdal = J_{\lambdal} + \tilde{R}_{\Lambda_\ell}.\]
Here $J_\lambdal$ is formally of order $S^{-2}$ and after normal-ordering given by
\[J_{\lambdal} = \frac{1}{32S^2} \sum_{(x,y) \subset \lambdal} \left( a^*_x a^*_y a^*_y a_y a_y a_y + a^*_x a^*_y a_y a_y - 2 a^*_x a^*_x a^*_y a_x a_y a_y + a^*_x a^*_x a^*_x a_x a_x a_y + a^*_x a^*_x a_x a_y \right).\]
% has been double checked
Using Wick's theorem we find the correction to $f(S,\beta,\lambdal)/S$ to be
\begin{equation}\label{eq:firstorderpert}\frac{\langle J_\lambdal \rangle}{\ell^3} = \frac{1}{4 S^2}\sum_{i=1}^3 \bigg[ (\rho+\rho^2) \frac{1}{\ell^3} \sum_{k} f(k) \cos(k_i) - \frac{1}{\ell^9} \sum_{k_1,k_2,k_3} f(k_1) f(k_2) f(k_3)\cos(k_{1,i}-k_{2,i}+k_{3,i})\bigg],\end{equation}
%has beeen double checked
where $\rho = \frac{1}{\ell^3}\sum_{k} f(k)$ and $f(k) = 1/(e^{\tb \varepsilon(k)}-1)$. The biggest contribution\footnote{Strictly speaking this sum is infinite because the contributions of $k_1=0$ and $k_2=0$ are infinite as a remnant of using periodic boundary conditions. The cancellation below also resolves this issue.} is
\begin{equation}\label{eq:biggesterror}\frac{\rho}{4 S^2} \frac{1}{\ell^3} \sum_{k} f(k) \sum_{i=1}^3 \cos(k_i) = \frac{1}{16 S^2 \ell^9} \sum_{k_1,k_2,k_3} f(k_1) f(k_2) \left[12-\varepsilon(k_1)-\varepsilon(k_2) \right] \sim \tb^{-3}.\end{equation}
The remaining part of \eqref{eq:firstorderpert} is finite and of order $\tb^{-11/2}$, as can be seen by expanding the cosines, observing that the lowest terms cancel, replacing the sum by an integral for $\ell \to \infty$ and using the scaling $\int \di^3 k\, ( \exp(\tb \lvert k\rvert^2) - 1 )^{-1} \sim \tb^{-3/2}$. % (by substituting $k' = \tb^{1/2}k$).

The big error \eqref{eq:biggesterror} originates from the summands in $J_{\lambdal}$ with four creation and annihilation operators (which all originate from normal-ordering of the formal expansion). Thus as an operator bound on $R_\lambdal$ we can not expect an estimate better than $\tb^{-3}$. Instead we have to take into account the structure of the interacting Gibbs state. We verify this by showing that in second order of perturbation theory, \eqref{eq:biggesterror} is cancelled up to a $\tb^{-5}$-remainder.

\medskip

\emph{Second order perturbation theory.} The second order perturbation theory is given through the Duhamel formula as
\begin{align}
\frac{f(S,\beta,\Lambda_\ell)}{S}  = - \frac{\log \tr e^{-\beta H_\lambdal}}{\tb \ell^3} & = - \frac{\log \tr e^{-\tb T_\lambdal}}{\tb \ell^3} + \frac{1}{\ell^3} \langle I_\lambdal+J_\lambdal \rangle\nn\\
& \quad - \frac{1}{\tb \ell^3}\int_0^{\tb}\di s \int_0^{s} \di s' \left( \langle I_\lambdal(s) I_\lambdal(s')\rangle -  \langle I_\lambdal\rangle \langle I_\lambdal\rangle \right) + \order(S^{-3}),\label{eq:secondorder} \end{align}
where $I_\lambdal(s) := e^{sT_\lambdal}I_\lambdal e^{-sT_\lambdal}$.
The contractions in momentum space (represented in the Feynman diagrams by lines with arrow pointing from the creation to the annihilation operator) are
\begin{equation}\label{eq:contractions}\begin{split}\langle a^*_k(s)a_{k'}(s')\rangle = \delta_{k,k'} e^{(s-s')\varepsilon(k)}f(k)\quad \text{and}\quad\langle a_k(s)a^*_{k'}(s')\rangle = \delta_{k,k'} e^{-(s-s')\varepsilon(k)}(1+f(k)).\end{split}\end{equation}
Using $a^*_x = \ell^{-3/2} \sum_k e^{ik\cdot x} a^*_k$, we find the interaction operator $I_{\lambdal}$ in momentum space to be
 \[\begin{split}I_\lambdal = 
%  \frac{1}{4}\sum_{\langle x,y\rangle \subset \lambdal} \left(a^*_x-a^*_y\right)^2 a_x a_y + \hc = 
\frac{1}{8 S \ell^3} \!\sum_{k_1,k_2,k_3,k_4}\! &  \delta_{k_1+k_2,k_3+k_4} \nu(k_1,k_2,k_3,k_4)a^*_{k_1} a^*_{k_2} a_{k_3} a_{k_4},\end{split}\]
where\footnote{We abbreviate $\varepsilon(k_1) = \varepsilon_1$, $\varepsilon(k_1-k_2) = \varepsilon_{1-2}$ etc.} $\nu(k_1,k_2,k_3,k_4) := \varepsilon_{4-2}-\varepsilon_4 - \varepsilon_1 + \varepsilon_{4-1} - \varepsilon_2 + \varepsilon_{3-2} + \varepsilon_{3-1} -\varepsilon_3$. Notice that we have symmetrized $\nu$ w.\,r.\,t.\ exchange of $k_1$ with $k_2$ or $k_3$ with $k_4$, so in Feynman diagrams $I_{\lambdal}$ is represented as a vertex with two equivalent outgoing and two equivalent ingoing legs.

The second order contribution \eqref{eq:secondorder} can through Wick's theorem be represented by the sum of all connected Feynman diagrams having two $I_{\lambdal}$-vertices, where the numerical factors count the number of equivalent diagrams:
\vspace{-.5em}
\begin{equation*}\label{eq:diagrams}4\begin{gathered}
\begin{fmfgraph*}(90,90)
\fmfleft{i}
\fmfright{o}
\fmf{phantom,tension=10}{i,v1}
\fmf{phantom,tension=10}{v2,o}
\fmflabel{$s$}{v1}
\fmflabel{$s'$}{v2}
\fmf{fermion,left,tension=0.4,label=$k_1$}{v1,v2}
\fmf{fermion,left,tension=0.4,label=$k_4$}{v2,v1}
\fmf{fermion,left=0.3,label=$k_2$}{v1,v2}
\fmf{fermion,left=0.3,label=$k_3$}{v2,v1}
% \fmf{fermion,right=0.5}{v1,v2}
\fmfdot{v1}
\fmfdot{v2}
\end{fmfgraph*}
\end{gathered}
+\ 16\ \hspace{2em} \begin{gathered}
\begin{fmfgraph*}(90,90)
\fmftop{t0,t1,t2,t3}
\fmfbottom{b0,b1,b2,b3}
\fmf{phantom}{t1,v1,b1}
\fmf{phantom}{t2,v2,b2}
\fmffreeze
\fmflabel{$s$}{v1}
\fmflabel{$s'$}{v2}
\fmf{fermion,right,label=$k_2$}{v1,v2}
\fmf{fermion,right,label=$k_2$}{v2,v1}
\fmf{plain,tension=0.7,right=270,label=$k_1$}{v1,v1}
\fmf{plain,tension=0.7,right,label=$k_3$}{v2,v2}
\fmfdot{v1,v2}
 \end{fmfgraph*}
\end{gathered}\hspace{2.3em}.\vspace{-.8em}\end{equation*}
\emph{Left Feynman diagram.} Using the contractions \eqref{eq:contractions} we write out the left Feynman diagram
\[\begin{split}
& -\frac{4}{\tb\ell^3} \int_0^\tb\di s\int_0^s \di s'\frac{1}{\left(8S\ell^3\right)^2}
\sum_{\substack{k_1,k_2,k_3,k_4\\k_1',k_2',k_3',k_4'}} \delta_{k_1+k_2,k_3+k_4} \delta_{k'_1+k'_2,k'_3+k'_4}
\nu(k_1,k_2,k_3,k_4)\nu(k'_1,k'_2,k'_3,k'_4)\\
&\quad\times\delta_{k_1,k_4'}\delta_{k_2,k_3'}\delta_{k_3,k_2'}\delta_{k_4,k_1'}e^{(s-s')\varepsilon_1}e^{(s-s')\varepsilon_2}e^{-(s-s')\varepsilon_3} e^{-(s-s')\varepsilon_4} f(k_1) f(k_2) (1+f(k_3)) (1+f(k_4)).
\end{split}\]
The integrations over $s'$ and $s$ are done explicitly, yielding
\[\begin{split}
& -\frac{1}{16\tb S^2 \ell^9} \sum_{k_1,k_2,k_3,k_4}\delta_{k_1+k_2,k_3+k_4}\nu(k_1,k_2,k_3,k_4)^2\\
&\quad\times\left[-\frac{\tb}{\varepsilon_1+\varepsilon_2-\varepsilon_3-\varepsilon_4}+ \frac{e^{\tb(\varepsilon_1+\varepsilon_2-\varepsilon_3-\varepsilon_4)}-1}{(\varepsilon_1+\varepsilon_2-\varepsilon_3-\varepsilon_4)^2} \right] \frac{e^{-\tb\varepsilon_1}}{1-e^{-\tb\varepsilon_1}} \frac{e^{-\tb\varepsilon_2}}{1-e^{-\tb\varepsilon_2}} \frac{1}{1-e^{-\tb\varepsilon_3}} \frac{1}{1-e^{-\tb\varepsilon_4}}.\end{split}\]
We observe that
$[e^{\tb(\varepsilon_1+\varepsilon_2-\varepsilon_3-\varepsilon_4)}-1]e^{-\tb\varepsilon_1}e^{-\tb\varepsilon_2}$ has odd sign under simultaneous exchange of $k_1$ with $k_3$ and $k_2$ with $k_4$, and is multiplied with an expression of even sign, thus summing to zero. We are left with the contribution of the first summand from the square brackets,
\begin{equation}\label{eq:leftdiagram}\begin{split}
& \frac{1}{16 S^2 \ell^9} \sum_{k_1,k_2,k_3,k_4}\delta_{k_1+k_2,k_3+k_4}\frac{\nu(k_1,k_2,k_3,k_4)^2}{\varepsilon_1+\varepsilon_2-\varepsilon_3-\varepsilon_4}f(k_1)f(k_2)(1+f(k_3))(1+f(k_4))\\
& = \frac{1}{16 S^2 \ell^9}\sum_{k_1,k_2,k_3} \frac{\left(2\varepsilon_{1-3}+2\varepsilon_{2-3}-\varepsilon_1-\varepsilon_2-\varepsilon_3-\varepsilon_{1+2-3} \right)^2}{\varepsilon_1+\varepsilon_2-\varepsilon_3-\varepsilon_{1+2-3}} \left[f(k_1)f(k_2) + 2 f(k_1) f(k_2) f(k_3) \right].\end{split}\end{equation}
(In the last line the contribution $f(k_1)f(k_2)f(k_3)f(k_4)$ disappeared by an antisymmetry argument.)

\emph{Right Feynman diagram.} The right Feynman diagram is evaluated similarly, and found to be
\begin{equation}\label{eq:rightdiagram}\begin{split}- \frac{\tb}{2 S^2 \ell^9}\sum_{k_1,k_2,k_3} \left(\varepsilon_{2-3} - \varepsilon_3-\varepsilon_2 \right) \left(\varepsilon_{2-1} - \varepsilon_1-\varepsilon_2 \right)
%\\&\quad\times
f(k_1) f(k_2) \left(1+f(k_2) \right) f(k_3).
\end{split}\end{equation} 

 Replacing the sum by an integral as $\ell \to \infty$ and expanding $\varepsilon(k) \simeq \lvert k\rvert^2$, we find that \eqref{eq:rightdiagram} is finite and of order $\tb^{-11/2}$ (so for us negligible). By the same argument, the part of \eqref{eq:leftdiagram} containing $f(k_1)f(k_2)f(k_3)$ is of order $\tb^{-11/2}$ (and thus also negligible).
 
 \medskip
 
 \emph{The cancellation.} As possibly big contributions remain the part of \eqref{eq:leftdiagram} containing only $f(k_1)f(k_2)$ and \eqref{eq:biggesterror}. The crucial observation \cite{Oguchi} is that these two terms cancel up to terms of order $\tb^{-5}$! Let us spell out the argument more carefully.
The sum of the two terms is
 \[\begin{split}& \frac{1}{16 S^2 \ell^9} \sum_{k_1,k_2,k_3} f(k_1)f(k_2) \left[ 12-\varepsilon_1-\varepsilon_2 + \frac{\left[ (\varepsilon_1+\varepsilon_2-\varepsilon_3-\varepsilon_{1+2-3})+2(\varepsilon_3 +  \varepsilon_{1+2-3} - \varepsilon_{2-3}-\varepsilon_{1-3}) \right]^2}{\varepsilon_1+\varepsilon_2 - \varepsilon_3 - \varepsilon_{1+2-3}} \right]\\
 & = \frac{1}{16 S^2 \ell^9} \sum_{k_1,k_2,k_3} f(k_1)f(k_2) \left[ 12 + 3\varepsilon_3 + 3\varepsilon_{1+2-3} - 4\varepsilon_{2-3}- 4\varepsilon_{1-3} + 4 \frac{(\varepsilon_3 +  \varepsilon_{1+2-3} - \varepsilon_{2-3}-\varepsilon_{1-3})^2}{\varepsilon_1+\varepsilon_2 - \varepsilon_3 - \varepsilon_{1+2-3}}\right]
   \end{split}
\]
 Recalling that $\varepsilon(k) = \sum_{j=1}^3 2(1-\cos(k_j))$, the first part vanishes since the sum
 is over all $k$, i.\,e.\ positive and negative parts of the cosine sum to zero:
 %is over the whole Brillouin zone:
 \[\begin{split}& \sum_{k_3} \left[ 12 + 3\varepsilon_3 + 3\varepsilon_{1+2-3} - 4\varepsilon_{2-3}- 4\varepsilon_{1-3} \right] =0.
 %\\ & = - \sum_{k_3} \sum_{i=1}^3 \left[ 6 \cos(k_{3,i}) + 6 \cos(k_{1,i} + k_{2,i} - k_{3,i}) + 8 \cos(k_{2,i}-k_{3,i}) + 8 \cos(k_{1,i}-k_{3,i}) \right] = 0.
 \end{split}\] 
 Concerning the second part, by expanding $\varepsilon(k)$, we find
 \[\begin{split}& \frac{1}{16 S^2 \ell^9} \sum_{k_1,k_2,k_3} f(k_1)f(k_2)\,4 \frac{(\varepsilon_3 +  \varepsilon_{1+2-3} - \varepsilon_{2-3}-\varepsilon_{1-3}))^2}{\varepsilon_1+\varepsilon_2 - \varepsilon_3 - \varepsilon_{1+2-3}}\\&\simeq
 \frac{1}{4S^2 \ell^9} \sum_{k_1,k_2,k_3} f(k_1)f(k_2)\frac{(k_1\cdot k_2)^2}{(k_1-k_3)\cdot (k_2-k_3)} \sim \tb^{-5} S^{-2}.\end{split}\]
 This agrees with Dyson's result at order $1/S^2$. 

Presumably such cancellations appear at all orders in perturbation theory. In Dyson's work the problematic quartic  terms in $J_{\lambdal}$ from normal-ordering do not appear, at the cost of working with a non-selfadjoint Hamiltonian. While this approach is supposed to be equivalent \cite{OtherOguchi} to the Holstein-Primakoff approach followed here, it  has never been made completely rigorous. It remains an interesting problem to rigorously obtain an upper bound with optimal $\tb$-dependence.

\section*{Acknowledgements}
I would like to thank Alessandro Giuliani and Michele Correggi for many discussions and the introduction to the topic. I acknowledge support by ERC grant CoMBoS-239694, ERC Advanced grant 321029, by MIUR through the FIR grant 2013 ``Condensed Matter in Mathematical Physics (Cond-Math)'' (code RBFR13WAET), and by VILLUM FONDEN via the QMATH Centre of Excellence (Grant No. 10059).

\bibliographystyle{plain}
\bibliography{heisenberg}
\end{fmffile}
\end{document}